\newtheorem{prop}{Proposition}
\newtheorem{remark}{Remark}
\begin{document}
\title{DynaMo: Dynamic Community Detection by Incrementally Maximizing Modularity}
\author{Di~Zhuang,
        J.~Morris~Chang,~\IEEEmembership{Senior~Member,~IEEE}
        and~Mingchen~Li
\IEEEcompsocitemizethanks{\IEEEcompsocthanksitem The authors are with the Department of Electrical Engineering, University of South Florida, Tampa, FL 33620.\protect\\
E-mail: \{zhuangdi1990, morrisjchang, mingchenli1992\}@gmail.com.}
}

\IEEEtitleabstractindextext{%
\begin{abstract}
Community detection is of great importance for online social network analysis. The volume, variety and velocity of data generated by today's online social networks are advancing the way researchers analyze those networks. For instance, real-world networks, such as Facebook, LinkedIn and Twitter, are inherently growing rapidly and expanding aggressively over time. However, most of the studies so far have been focusing on detecting communities on the static networks. It is computationally expensive to directly employ a well-studied static algorithm repeatedly on the network snapshots of the dynamic networks. We propose DynaMo, a novel modularity-based dynamic community detection algorithm, aiming to detect communities of dynamic networks as effective as repeatedly applying static algorithms but in a more efficient way. DynaMo is an adaptive and incremental algorithm, which is designed for incrementally maximizing the modularity gain while updating the community structure of dynamic networks. In the experimental evaluation, a comprehensive comparison has been made among DynaMo, Louvain (static) and 5 other dynamic algorithms. Extensive experiments have been conducted on 6 real-world networks and 10,000 synthetic networks. Our results show that DynaMo outperforms all the other 5 dynamic algorithms in terms of the effectiveness, and is 2 to 5 times (by average) faster than Louvain algorithm.
\end{abstract}

\begin{IEEEkeywords}
Community detection, dynamic network analysis, modularity, incremental approach
\end{IEEEkeywords}}

\maketitle

\IEEEdisplaynontitleabstractindextext

%
\IEEEpeerreviewmaketitle

\IEEEraisesectionheading{\section{Introduction}\label{sec:introduction}}
\IEEEPARstart{W}{ith} the advance of online social network analysis, more and more real-world systems, such as 
social networks \cite{de2018exploratory}, collaboration relationships \cite{zhou2017collaboration}, recommendation systems \cite{ying2018graph} and intrusion detection system \cite{zhuang2017peerhunter, zhuang2019enhanced}, are represented and analyzed as networks, where the vertices represent certain objects and the edges represent the relationships or connections between the objects. Most social networks have been shown to present certain community structures \cite{he2018hidden}, 
where vertices are densely connected within communities and sparsely connected between communities. Community detection is one of the most important and fundamental problem in the field of graph mining, network science and social network analysis.

Detecting community structure is of great challenge, and most of the recent studies are proposed to detect communities in the static networks, such as spectral clustering \cite{zhang2018understanding}, 
label propagation \cite{meng2018semi}, 
modularity optimization \cite{newman2006finding}, 
and k-clique communities \cite{hao2018detecting}. 
However, real-world networks, especially most of the online social networks, are not static. Most popular online social networks (e.g., Facebook, LinkedIn and Twitter) are de facto growing rapidly and expanding aggressively in terms of either the size or the complexity over time. For instance, in Facebook network, the updating of its community structure could be simply caused by new users joining in, old users leaving, or certain users connecting (i.e., friend) or disconnecting (i.e., unfriend) with the other users. Facebook announced that it had 1.52 billion daily active users in the fourth quarter of 2018 \cite{FacebookDAUs}, which shows a 9\% increase over the same period of the previous year, and 4 million likes generated every minute as of January 2019 \cite{FacebookFacts}. Hence, it is rather important and impending to enable community detection in such dynamic networks.

Designing an effective and efficient algorithm to detect communities in dynamic networks is highly difficult.
First, an efficient algorithm should update the communities adaptively and incrementally depending on the changes of the dynamic networks, and avoid redundant and repetitive computations.
Second, it is hard to design a dynamic algorithm that performs as effective as the static algorithms by only observing the historical community structures and the incremental changes of the dynamic networks.
Third, it is still quite open about how to categorize the incremental changes of dynamic networks, and how to assess the influence of different types of the incremental changes on the community structure updates, which is rather important to design an effective and efficient dynamic algorithm.

A few algorithms have been proposed to detect communities in dynamic networks 
\cite{dhouioui2014tracking, ilhan2015predicting, nguyen2011adaptive, shang2014real, xin2016adaptive, shang2016targeted, agarwal2018dyperm, chong2013incremental, rossetti2017tiles}.
An intuitive way to detect communities in dynamic networks is to slice the network into small snapshots based on the timestamps, and directly employ well-studied static algorithms repeatedly on each network snapshot. However, these algorithms 
\cite{dhouioui2014tracking, ilhan2015predicting} usually are computational expensive, since they compute the current community structures completely independent from the historical information (i.e., the previous community structures), especially when the dynamic network changes rapidly and the time interval between two consecutive network snapshots are extremely small. Another way to update the communities is using not only the current network changes but also the previous community structures.
These algorithms \cite{nguyen2011adaptive, shang2014real, xin2016adaptive, shang2016targeted, agarwal2018dyperm, chong2013incremental, rossetti2017tiles} adaptively and incrementally detect communities in dynamic networks, without re-executing any static algorithms on each entire network snapshot. Those algorithms are usually more efficient than repeatedly applying static algorithms on network snapshots. However, most of those algorithms are still not practical enough to be directly used to analyze the real-world networks. For instance, some algorithms \cite{nguyen2011adaptive, shang2016targeted} only considers vertices/edges additions, while vertices/edges deletions happen quite often in online social networks (e.g., ``unfriend'' in Facebook). Some algorithms \cite{nguyen2011adaptive, shang2014real, shang2016targeted, chong2013incremental} only consider unweighted networks, which are not applicable for weighted networks. Furthermore, some algorithms \cite{lin2009analyzing, lin2011community, rossetti2017tiles} need certain prior information about the community structures (e.g., the number of communities, the ratio of vertices in overlapped communities) or need certain predefined parameters which are not available in practice.

We present DynaMo, a novel modularity-based dynamic community detection algorithm, aiming to detect non-overlapped communities of dynamic networks. DynaMo is an adaptive and incremental algorithm designed for maximizing the modularity gain while updating the community structure of dynamic networks. To update the community structures efficiently, we model the dynamic network as a sequence of incremental network changes. We propose 6 types of incremental network changes: (a) intra-community edge addition/weight increase, (b) cross-community edge addition/weight increase, (c) intra-community edge deletion/weight decrease, (d) cross-community edge deletion/weight decrease, (e) vertex addition, and (f) vertex deletion. For each incremental network change, we design an operation to maximize the modularity.

In the experimental evaluation, a comprehensive comparison has been made among DynaMo, Louvain (static) \cite{blondel2008fast} and 5 dynamic algorithms (i.e., QCA \cite{nguyen2011adaptive}, Batch \cite{chong2013incremental}, GreMod \cite{shang2014real}, LBTR-LR \cite{shang2016targeted} and LBTR-SVM \cite{shang2016targeted}). Extensive experiments have been conducted on 6 large-scale real-world networks and 10,000 synthetic networks. Our results show that DynaMo consistently outperforms all the other 5 dynamic algorithms in terms of the effectiveness, and is 2 to 5 times (by average) faster than Louvain algorithm. To summarize, our work has the following contributions:

$\bullet$ We present a novel, effective and efficient modularity-based dynamic community detection algorithm, DynaMo, capable of detecting non-overlapped communities in real-world dynamic networks.

$\bullet$ We present the theoretical analysis to show why/how DynaMo could maximize the modularity, while avoiding certain redundant and repetitive computations.

$\bullet$ A comprehensive comparison among our algorithm and the state-of-the-art algorithms has been conducted (Section~\ref{sec:ExperimentalEvaluation}). For the sake of reproducibility and convenience of future studies about dynamic community detection, we have released our prototype implementation of DynaMo, the experiment datasets and a collection of the implementations of the other state-of-the-art algorithms. \footnote[1]{\url{https://github.com/nogrady/dynamo}}

The rest of this paper is organized as follows:
Section~\ref{sec:RelatedWork} presents the related work.
Section~\ref{sec:Preliminaries} presents the notations, the concept of dynamic networks and the definition of modularity, and introduces a baseline static community detection algorithm (i.e., Louvain algorithm).
Section~\ref{sec:DynaMo} describes our algorithm design and theoretical analysis.
Section~\ref{sec:ExperimentalEvaluation} presents the experimental evaluation.
Section~\ref{sec:Conclusion} concludes.

\section{Related Work} \label{sec:RelatedWork}
To date, a few dynamic community detection approaches were proposed
\cite{greene2010tracking, dhouioui2014tracking, ilhan2015predicting, lin2009analyzing, lin2011community, tang2012identifying, guo2014evolutionary, zakrzewska2015dynamic, rossetti2017tiles, ma2017evolutionary, nguyen2011adaptive, chong2013incremental, shang2014real, shang2016targeted, cordeiro2016dynamic}.
Rossetti et al. \cite{rossetti2018community}, a comprehensive survey on dynamic community detection, divide most of the algorithms into three categories (i.e., instant-optimal, temporal trade-off and cross-time) in terms of their ability to solve the community instability and temporal smoothing issue. However, some approaches in the literature are not belonging to any of these categories. For instance, some approaches, such as our proposed algorithm, do not consider the community instability and temporal smoothing issue, but still aim to detect communities in dynamic networks effectively and efficiently. The concept of ``cross-time'' approaches is also beyond the scope of this paper. After incorporating the taxonomy of \cite{rossetti2018community}, we consider three categories of related approaches: instant-optimal, temporal trade-off and incremental approaches (to replace the ``cross-time'' in \cite{rossetti2018community}).

The instant-optimal approaches \cite{greene2010tracking, dhouioui2014tracking, ilhan2015predicting} have two steps: (i) static algorithms are applied on each network snapshot independently to detect static communities, (ii) communities detected on each network snapshot are matched with communities detected on the previous one. Greene et al. \cite{greene2010tracking} proposed a general model for tracking communities in dynamic networks via solving a classic cluster matching problem on the communities independently detected on consecutive network snapshots. Such approaches take advantage of existing static algorithms. However, repeatedly applying static algorithms on all network snapshots of the dynamic networks 
is computationally expensive.


The temporal trade-off approaches \cite{lin2009analyzing, lin2011community, tang2012identifying, guo2014evolutionary, zakrzewska2015dynamic, rossetti2017tiles, ma2017evolutionary} incorporate the community detection and tracking via considering the community structures of the current and historical network snapshots at the same time. Those approaches aim to maintain the evolution of the community structures of the dynamic networks, where the community structure (e.g., the number of communities, the size of communities) of the current network snapshot should be similar to that of the previous one. Tang et al. \cite{tang2012identifying} propose a temporally regularized clustering algorithm to identify evolving groups in dynamic networks, where they use a metric that attempts to optimize two objectives: the quality of the current community structure and the similarity between the current and the previous community structures. However, most of those approaches, such as \cite{lin2009analyzing, lin2011community, ma2017evolutionary}, require determining the number of communities to be detected/tracked in advance, which is rather impractical for the real-world dynamic networks where the number of communities changes over time. 

The incremental approaches \cite{nguyen2011adaptive, chong2013incremental, shang2014real, shang2016targeted, cordeiro2016dynamic} adaptively update the community structures fully based on the network changes happened during the current snapshot and the community structure of the previous snapshot. 
For instance, GreMod \cite{shang2014real} is a rule-based incremental algorithm that performs the predetermined operations on different types of the edge addition changes of the dynamic network. QCA \cite{nguyen2011adaptive} is another rule-based adaptive algorithm that updates the community structures according to the predefined rules of different types of the incremental changes (i.e., vertices/edges addition/deletion) on the dynamic network. QCA is also one of the most efficient dynamic community detection algorithms in the literature. However, since the rule-based algorithms, such as GreMod \cite{shang2014real} and QCA \cite{nguyen2011adaptive}, considers each network change as an independent event,
they are less efficient when abundant (i.e., a batch of) network changes appear in the same network snapshot.
Chong et al. \cite{chong2013incremental} propose a batch-based incremental modularity optimization algorithm that updates the community structures by initializing all of the new and changed vertices of the current network snapshot (i.e., the batch) as singleton communities and using Louvain algorithm to further update the community structures. However, since their initialization approach, that generates the intermediate community structure of a batch of network changes, is rather coarse, it is less efficient to apply Louvain algorithm on those intermediate community structures. LBTR \cite{shang2016targeted} is a learning-based framework that uses machine learning classifiers and historical community structure information to predict certain vertices' new community assignments after each round of network changes. In those learning-based algorithms, once the models are being trained, the testing phase could be very efficient. However, since the supervised nature of the learning-based algorithms, it would be extremely hard to generalize the trained models. For instance, the models trained on one type of dynamic networks (e.g., social network) might be less effective to another type of dynamic networks (e.g., collaboration network). Furthermore, even for the same dynamic network, the network patterns change over time. Thus, the models have to be updated periodically, which would be rather illogical, since the network usually changes rapidly and updating models is also time consuming.

Our proposed approach, DynaMo, is an adaptive and incremental algorithm. Compared with rule-based algorithms \cite{shang2014real, nguyen2011adaptive}, our approach is capable of processing a set of network changes as a batch, and redesigns the ``rules'' by considering more extreme cases (Section~\ref{sec:DynaMoAlgorithm}). Compared with batch-based algorithms \cite{chong2013incremental}, our approach has a more fine-grained initialization phase (Section~\ref{sec:DynaMoAlgorithm}), which could reduce the computation time dramatically. Compared with learning-based algorithms \cite{shang2016targeted}, our approach is more generalized to real-world networks.
In Section~\ref{sec:ExperimentalEvaluation}, we compare DynaMo with Louvain algorithm and 5 other dynamic algorithms on 6 real-world networks and 10,000 synthetic networks, showing that DynaMo consistently outperforms all the other 5 dynamic algorithms in terms of effectiveness, and much more efficient than Louvain algorithm.

\section{Preliminaries} \label{sec:Preliminaries}
In this section, we introduce 1) the notations; 2) the dynamic network model; 3) modularity, to quantify the quality of a community structure; and 4) Louvain algorithm, a modularity-based static community detection approach.

\subsection{Notations}
Let $G=(V, E)$ be an undirected weighted network, where $V$ is a set of vertices ($n=|V|$), $E$ is a set of undirected weighted edges ($m=|E|$), and there could be more than one edge between a pair of vertices. Let $C$ denote a set of disjoint communities associated with $G$, $A_{ij}$ denote the sum of the weights of all the edges between vertices $i$ and $j$, $k_{i}$ denote the sum of the weights of all the edges linked to vertex $i$, and $c_{i}$ denote the assigned community of vertex $i$.

\subsection{Dynamic Network}
Let $G^{(t)}$ denote the snapshot of a network at time $t$, and $\triangle G^{(t)}=(\triangle V^{(t)}, \triangle E^{(t)})$ denote the incremental change from $G^{(t)}$ to $G^{(t+1)}$ (i.e., $G^{(t+1)}=G^{(t)} \cup \triangle G^{(t)}$), where $\triangle V^{(t)}$ and $\triangle E^{(t)}$ are the sets of vertices and edges 
being changed during time period $(t, t+1]$. A dynamic network $G$ is a sequence of its network snapshots changing over time: $G=\{G^{(0)},$ $G^{(1)},$ $\dots,$ $G^{(t)}\}$.

\subsection{Modularity} \label{sec:Preliminaries_Modularity}
Modularity~\cite{newman2006modularity} is a widely used criteria to evaluate the quality of given network community structure. Community structures with high modularity have denser connections among vertices in the same communities but sparser connections among vertices from different communities. Given network $G=(V, E)$, its modularity is defined as follows:
\begin{equation}
    \begin{split} \label{eq:modularity}
        Q=\frac{1}{2m}\sum_{i, j \in V}[A_{ij}-\frac{k_{i}k_{j}}{2m}]\delta_{ij}
        =\frac{1}{2m}{\sum_{c \in C}^{c}}(\alpha_{c}-\frac{\beta_{c}^{2}}{2m})
    \end{split}
\end{equation}
where $\alpha_{c}=\sum_{i, j \in c} A_{ij}$, $\beta_{c}=\sum_{i \in c} k_{i}$ and $\delta_{ij}$ equals to 1, if $i$, $j$ belong to the same community, otherwise equals to 0.

\subsection{Louvain Method for Community Detection} \label{sec:Preliminaries_Louvain}
Since the modularity optimization problem is known to be NP-hard, various heuristic approaches are proposed \cite{clauset2004finding, duch2005community, newman2006finding}. Most of the algorithms have been superseded by Louvain algorithm~\cite{blondel2008fast}, which attempts to maximize the modularity using a greedy optimization approach composed of three steps:
(i) {\bf Initialization}, where each vertex forms a singleton community. (ii) {\bf Local Modularity Optimization}, where each vertex moves from its own community to its neighbor's community to maximize the local modularity gain. If there is no positive modularity gain, keep the vertex in its original community. Repeat this step over all vertices multiple times until the modularity gain is negligible. (iii) {\bf Network Compression}, where vertices belonging to the same community are aggregated as super vertices and a new network is built with the super vertices.


Louvain algorithm repeats the last two steps, until the modularity improvements is negligible. Although the actual computational complexity of Louvain depends on the input network, it has an average-case time complexity of $O(m)$ with most of the computational effort spending on the optimization of the first level network (i.e., before creating the super vertices).

\section{DynaMo: Dynamic Community Detection by Incrementally Maximizing Modularity} \label{sec:DynaMo}

\begin{figure}[tb]
\centering
\includegraphics[width=255pt]{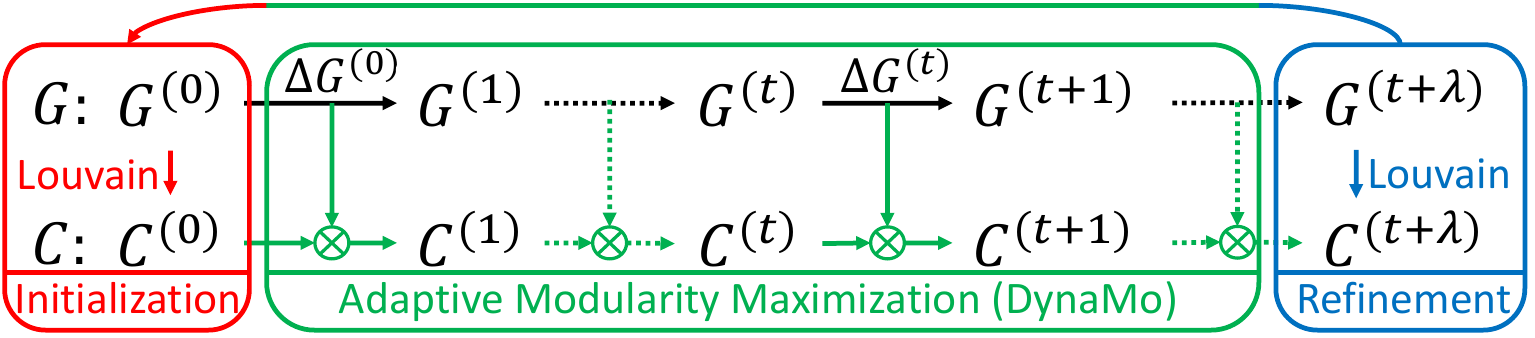}
\caption{The overview of DynaMo.}
\label{fig:Evolve}
\end{figure}

\subsection{Problem Statement} \label{sec:DynaMo_ProblemStatement}
Given a dynamic network $G=\{G^{(0)},$ $G^{(1)},$ $\dots,$ $G^{(t)}\}$, where $G^{(0)}$ is the initial network snapshot, let $C=\{C^{(0)},$ $C^{(1)},$ $\dots,$ $C^{(t)}\}$ denote the list of community structures of the corresponding network snapshots. As illustrated in Fig.~\ref{fig:Evolve}, we aim to design an adaptive and incremental algorithm to detect $C^{(t+1)}$, given $G^{(t)}$, $C^{(t)}$ and $\triangle G^{(t)}$.

\subsection{Methodology Overview} \label{sec:DynaMo_MethodologyOverview}
As shown in Fig.~\ref{fig:Evolve}, our approach has three components: 

\begin{itemize}
    \item \textit{Initialization}: Use well-studied static algorithms (i.e., Louvain~\cite{blondel2008fast}) to compute $C^{(0)}$, which generates a comparatively accurate community structure of $G^{(0)}$.
    \item \textit{Adaptive Modularity Maximization (DynaMo)}:  Given $G^{(t)}$, $C^{(t)}$ and $\triangle G^{(t)}$, update the community structure of $G^{(t+1)}$ from $C^{(t)}$ to $C^{(t+1)}$ while maximizing the modularity gain, using predesigned strategies that fully depend on 
        $\triangle G^{(t)}$ and 
        $C^{(t)}$. This is the core component of our framework that relies on fine-grained and theoretical-verified strategies (Section~\ref{sec:DynaMoAlgorithm}) to maximize the modularity gain while maintaining the efficiency. 
	\item \textit{Refinement}: Once the obtained modularity of $C^{(t+\lambda)}$ is less than a predefined threshold, use $G^{(t+\lambda)}$ as the new initial network snapshot to restart our algorithm from the initialization step. This component prevents our frame from being trapped in the suboptimal solutions.
\end{itemize}

\subsection{The DynaMo Algorithm}
\label{sec:DynaMoAlgorithm}
DynaMo is an adaptive and incremental algorithm aiming to maximize the community structure modularity gain based on the incremental changes of a dynamic network. We propose a two-step approach: (i) initialize an intermediate community structure, depending on the incremental network changes and the previous network community structure, and (ii) repeat the last two steps of Louvain algorithm (Section~\ref{sec:Preliminaries_Louvain}) on the intermediate community structure until the modularity gain is negligible.

Our algorithm benefits community detection in dynamic networks in three folds.
First, in the initialization step, we categorize the incremental changes into 6 types. For each type of the incremental change, we design a strategy to initialize its corresponding intermediate community structure. Most of the strategies are theoretically verified to incrementally maximize the modularity, while avoiding redundant and repetitive computations.
Second, compared with the original initialization step of Louvain algorithm, our initialization step takes advantage of the historical information, thus reduces most of the unnecessary computations happened at Louvain's first level network optimization, where Louvain spends most of its computational effort (Section~\ref{sec:Preliminaries_Louvain}). Hence, DynaMo would be much more efficient than Louvain algorithm while detecting communities in dynamic networks.
Third, in the initialization, our algorithm could process a set of incremental changes as a batch, which makes the computational complexity of our algorithm less sensitive to the amount of incremental changes and the frequency of network changes. So, DynaMo can detect communities while the network changing rapidly.

In this section, 6 different types of the incremental changes have been defined, where the initialization strategy of each type is also designed accordingly. Eight propositions are proposed and proved to provide the theoretical guarantees of our strategies towards maximizing the modularity.

\subsubsection{Edge Addition/Weight Increase (EA/WI)}
In this scenario, an edge $(i, j, w_{ij})$ between two existing vertices $i$ and $j$ has been changed to $(i, j, w_{ij}+\triangle w)$, where $w_{ij} \geq 0$ and $\triangle w > 0$. Edge addition is a special case of edge weight increase, where $w_{ij}=0$. Depending on the edge property, we define two sub-scenarios:

{\bf Intra-Community EA/WI (ICEA/WI):} Vertices $i$ and $j$ belong to the same community (i.e., $c_{i}=c_{j}$). According to Proposition~\ref{prop:1}, ICEA/WI will never split $i$ and $j$ into different communities. And according to Remark~\ref{remark:1}, sometimes ICEA/WI will split $c_{i}$ into multiple communities, while keeping $i$ and $j$ in the same community. Proposition~\ref{prop:2} also provides us a convenient tool to decide when $c_{i}$ should be bi-split into two smaller communities (i.e., $c_{p}$ and $c_{q}$). However, this approach requires checking all the bi-split combinations of $c_{i}$, which is time consuming, especially when $c_{i}$ is huge. In this case, we propose to initialize $i$ and $j$ as a two-vertices community, and all the other vertices in $c_{i}$ as singleton communities.

\begin{figure*}[h]
\captionsetup{font=footnotesize}
        \centering
        \begin{subfigure}[b]{0.5\textwidth}
                \includegraphics[width=\textwidth]{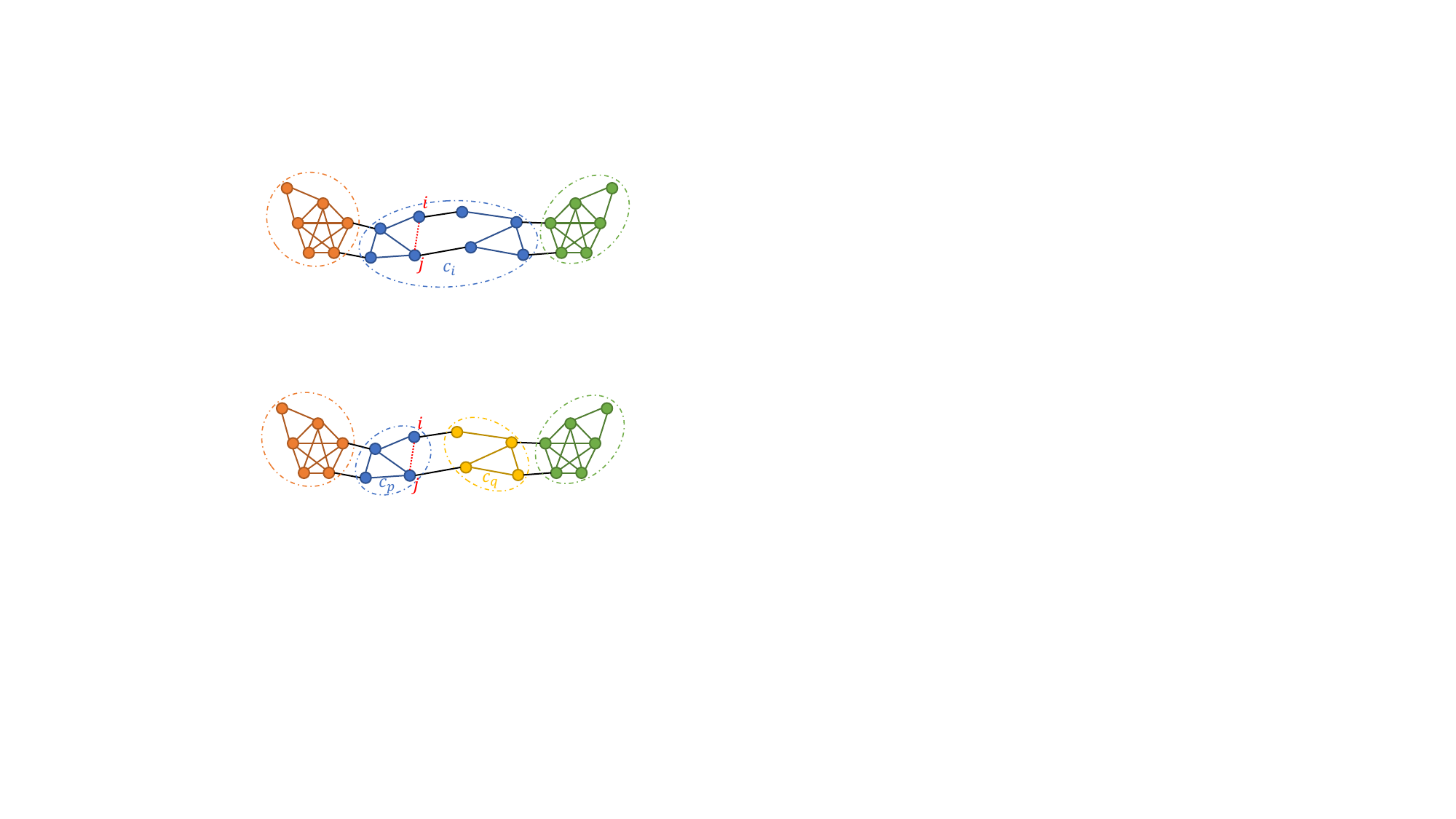}
                \caption{}
                \label{fig:ICEA_SP_CP_1}
        \end{subfigure}%
        ~ 
        \begin{subfigure}[b]{0.5\textwidth}
                \includegraphics[width=\textwidth]{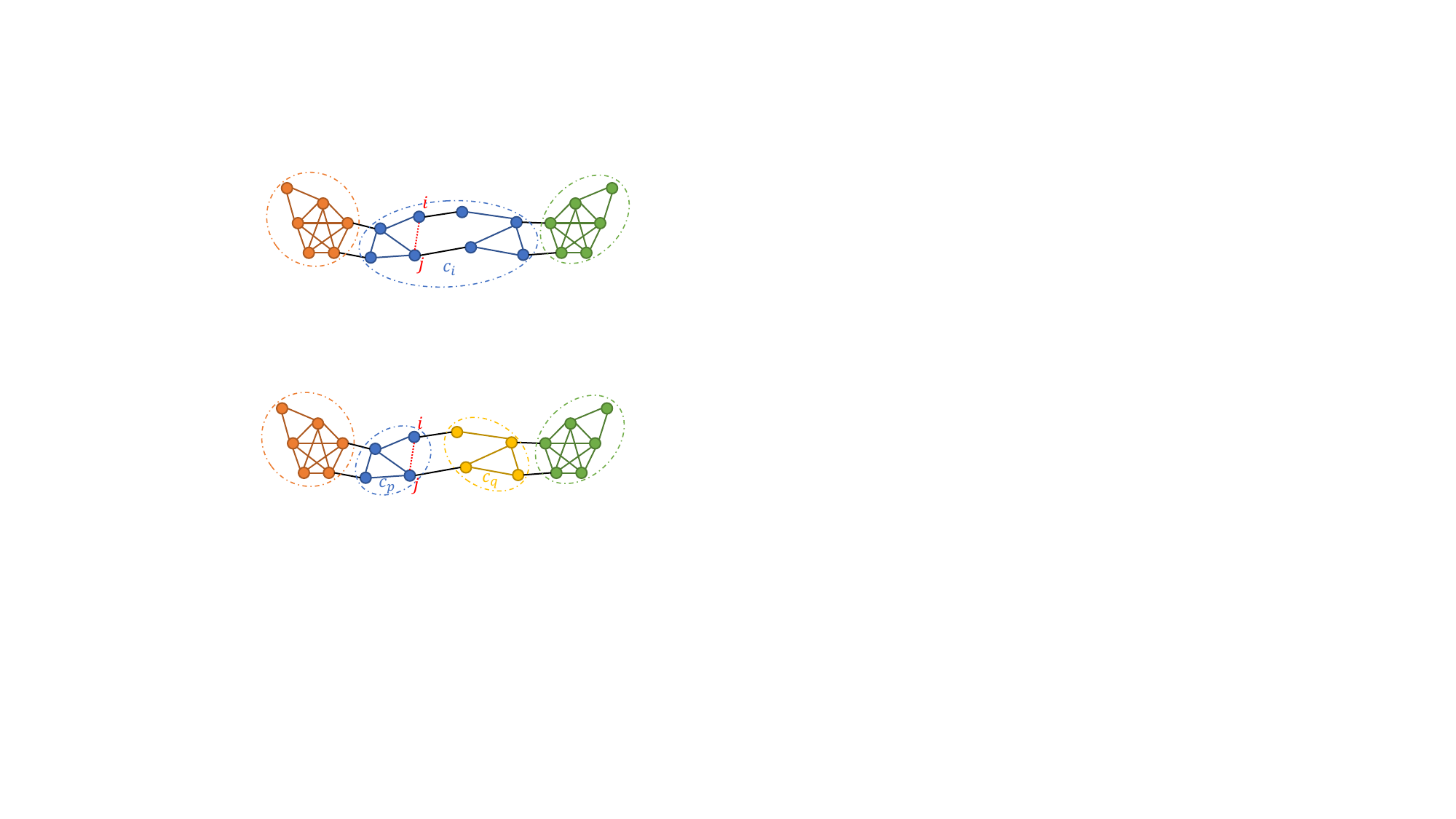}
                \caption{}
                \label{fig:ICEA_SP_CP_2}
        \end{subfigure}
        \caption{Two possible behaviors of the community structure after adding an intra-community edge: (a) unchanged, (b) splitting to smaller communities.}
        \label{fig:ICEA_SP_CP}
\end{figure*}

\begin{prop} \label{prop:1}
Adding an edge or increasing the edge weight between vertices $i$ and $j$, that belong to the same community ($c_{i}=c_{j}$), will not split $i$ and $j$ into different communities.
\end{prop}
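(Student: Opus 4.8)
The plan is to reduce the statement to a monotonicity property of the modularity gain and then close by contradiction. First I would record the elementary ``merge gain'' that follows from the modularity definition \eqref{eq:modularity}: if two disjoint communities $a$ and $b$ are fused, and $e_{ab}=\sum_{u\in a,\,v\in b}A_{uv}$ is the total weight between them, then
\[
\Delta Q_{\mathrm{merge}}(a,b)=\frac{1}{m}\Bigl(e_{ab}-\frac{\beta_a\beta_b}{2m}\Bigr),
\]
since fusing $a$ and $b$ replaces the terms $\alpha_a-\frac{\beta_a^2}{2m}$ and $\alpha_b-\frac{\beta_b^2}{2m}$ by $\alpha_a+\alpha_b+2e_{ab}-\frac{(\beta_a+\beta_b)^2}{2m}$. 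Separating $i$ from $j$ is exactly the reverse move across some cut, so the whole question is whether increasing $w_{ij}$ can ever make a cut separating $i$ and $j$ profitable.

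The core is a monotonicity lemma, which I would prove next. Fix any bipartition $(P,Q)$ of $V$ with $i\in P$ and $j\in Q$, so that $\beta_P+\beta_Q=2m$, and let the weight of the single edge $(i,j)$ grow by $t>0$. Because this edge crosses the cut, the quantities $e_{PQ}$, $\beta_P$, $\beta_Q$ and $m$ each increase by exactly $t$, and the identity $\beta_P+\beta_Q=2m$ is preserved throughout. A short computation then shows that along this path
\[
\frac{d}{dt}\,\Delta Q_{\mathrm{merge}}(P,Q)=\frac{\beta_P\beta_Q-m\,e_{PQ}}{m^{3}}\ge 0,
\]
the sign following from two combinatorial facts: the cut weight never exceeds either side's total degree, $e_{PQ}\le\min(\beta_P,\beta_Q)$, while $\beta_P+\beta_Q=2m$ forces $\max(\beta_P,\beta_Q)\ge m$, so that $m\,e_{PQ}\le m\min(\beta_P,\beta_Q)\le\max(\beta_P,\beta_Q)\min(\beta_P,\beta_Q)=\beta_P\beta_Q$. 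Hence strengthening the intra-pair edge can only \emph{raise} the gain of keeping the two sides together, i.e.\ only \emph{lower} the profitability of every cut that places $i$ and $j$ apart.

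With the lemma in hand I would finish by contradiction. Suppose that, after the ICEA/WI update, the re-optimization placed $i$ and $j$ in distinct communities $X\ni i$ and $Y\ni j$. Then the induced separation of $i$ from $j$ defines a cut whose split gain must be preferred by the optimizer, whereas the incoming structure $C^{(t)}$ has $c_i=c_j$, so beforehand no separating cut was favored over keeping $i,j$ together. By the monotonicity lemma the weight increase only weakens every separating cut, so none that was unfavorable can have become favorable; this contradiction shows $i$ and $j$ remain in a common community. This is consistent with Remark~\ref{remark:1} and Proposition~\ref{prop:2}: the surrounding community may still fracture after the weight increase, but any resulting pieces must keep $i$ and $j$ on the same side.

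The step I expect to be the main obstacle is precisely this last reduction -- turning the clean two-set monotonicity into a statement about the fully re-optimized, multi-community partition. One must argue carefully that ``$i$ and $j$ end up separated'' forces the existence of a genuinely profitable cut separating them (a coarsening/exchange argument across the output communities $X$ and $Y$), and one must make explicit the optimality assumption on $C^{(t)}$ that the premise $c_i=c_j$ is meant to encode. Signing the derivative is routine once the two bounds on $e_{PQ}$ are invoked; marrying that local inequality to the global behavior of the optimizer is where the real care is needed.
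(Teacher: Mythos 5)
Your merge-gain formula and the bipartition computation are both correct, and in spirit your monotonicity lemma is exactly what the paper establishes in discrete form: equation \eqref{eq:1.7} writes the new gap $Q^{(t+1)}_{1}-Q^{(t+1)}_{2}$ as the old gap (with $m$ shifted to $m+\triangle w$) plus the strictly positive increment $\frac{\triangle w(2m-\beta_{c^{\prime}_{i}}-\beta_{c^{\prime}_{j}})+(\triangle w)^{2}}{m+\triangle w}$, and then closes using the optimality of $C^{(t)}$ via \eqref{eq:1.5}. But as written your proposal has two genuine gaps. First, your lemma is proved for the wrong class of cuts: you assume a bipartition $(P,Q)$ of \emph{all} of $V$, and your sign argument leans essentially on $\beta_{P}+\beta_{Q}=2m$ (to get $\max(\beta_{P},\beta_{Q})\ge m$ and hence $m\,e_{PQ}\le\beta_{P}\beta_{Q}$). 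The comparison the proposition actually requires --- and the only one the paper makes --- is between keeping $c_{i}$ intact and bi-splitting $c_{i}$ into $c^{\prime}_{i}\ni i$ and $c^{\prime}_{j}\ni j$ \emph{with all other communities held fixed}, where $\beta_{c^{\prime}_{i}}+\beta_{c^{\prime}_{j}}=\beta_{c_{i}}\le 2m$. There the pointwise bound $m\,e\le\beta_{a}\beta_{b}$ simply fails (take $\beta_{a},\beta_{b}$ small and $m$ large), and the derivative of the merge gain along your path acquires extra terms: its numerator becomes $2m^{2}-2me-m(\beta_{a}+\beta_{b})+2\beta_{a}\beta_{b}$, which is still nonnegative (use $e\le\min(\beta_{a},\beta_{b})$ and concavity in $\beta_{a}$, with endpoint values $m(2m-\beta_{a}-\beta_{b})\ge 0$ and $\tfrac{1}{2}(\beta_{a}+\beta_{b}-2m)^{2}\ge 0$), but this is a different proof from the one you gave, and your stated lemma does not apply to the situation at hand.

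Second --- and you flag this yourself --- the reduction from ``the re-optimized multi-community partition separates $i$ and $j$'' to ``some separating cut that was unprofitable became profitable'' is not yet an argument. Optimality of the output at time $t+1$ only gives $\Delta Q_{\mathrm{merge}}(X,Y)\le 0$ for the new communities $X\ni i$, $Y\ni j$, while optimality of $C^{(t)}$ says nothing about the sign of $\Delta Q_{\mathrm{merge}}(X,Y)$ in the old graph, because $X$ and $Y$ need not be unions of old communities; so the lemma alone produces no contradiction. The paper never attempts this global statement: its Proposition~\ref{prop:1} is precisely the restricted claim that, assuming $C^{(t)}$ was optimal against bi-splits of $c_{i}$ (rest of the partition fixed, which is what the premise $c_{i}=c_{j}$ ``optimal'' encodes in \eqref{eq:1.5}), no such bi-split separating $i$ from $j$ beats ``unchanged'' after the update --- which is all the initialization strategy needs, and which is consistent with Remark~\ref{remark:1} and Proposition~\ref{prop:2}, where splits keeping $i$ and $j$ together may still win. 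To repair your write-up, state the lemma for two communities inside a fixed partition with the corrected sign argument above (or simply perform the discrete computation as in \eqref{eq:1.7}, which also yields the strict inequality your weak monotonicity does not), make the hypothesis $Q^{(t)}_{1}\ge Q^{(t)}_{2}$ for every bi-split explicit, and drop the global contradiction step.
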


See Appendix \ref{Appendix_A_1} for the proof.

\begin{remark} \label{remark:1}
Although our Proposition 1 shows that ICEA/WI between $i$ and $j$, where $c_{i}=c_{j}$, will not split them into different communities, sometimes splitting $c_{i}$ into smaller communities in other ways (i.e., keeping $i$ and $j$ in the same community after the splitting) might maximize the modularity. For instance, as shown in Fig.~\ref{fig:ICEA_SP_CP}, assume all the edge weights are 1.0, and the red dash line between $i$ and $j$ is a newly added intra-community edge. Before adding the new edge, the modularity of community structure in Fig.~\ref{fig:ICEA_SP_CP_1} (i.e., $0.561$) is higher than that in Fig.~\ref{fig:ICEA_SP_CP_2} (i.e., $0.558$). However, after adding the new edge, the modularity of community structure in Fig.~\ref{fig:ICEA_SP_CP_1} (unchanged, i.e., $0.564$) becomes lower than that in Fig.~\ref{fig:ICEA_SP_CP_2} (split, i.e., $0.568$). In this case, although an intra-community edge has been added, splitting $c_{i}$ into $c_{p}$ and $c_{q}$ provides higher modularity. Our algorithm carefully considers these ``counterintuitive'' cases, which is different from QCA \cite{nguyen2011adaptive, nguyen2014dynamic}, thus, leading our algorithm to be more effective (Section~\ref{sec:ExperimentalEvaluation_Effectiveness}).
\end{remark}

\begin{prop} \label{prop:2}
(ICEA/WI Community Bi-split) After ICEA/WI between vertices $i$ and $j$, where $c_{i}=c_{j}$, if a bi-split of $c_{i}$ (i.e., $c_{p} \subseteq c_{i}$ and $c_{q} = c_{i} \backslash c_{p}$) does not exist such that $\triangle w > \frac{m\alpha_{1}-\beta_{c_{p}}\beta_{c_{q}}}{2\beta_{c_{q}}-\alpha_{1}}$, where $\alpha_{1}=\alpha_{c_{i}} - \alpha_{c_{p}} - \alpha_{c_{q}}$, any other bi-split of $c_{i}$ will not improve the modularity gain comparing with keeping the community structure unchanged.
\end{prop}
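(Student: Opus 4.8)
The plan is to reuse the modularity bookkeeping from the proof of Proposition~\ref{prop:1}, but now to compare ``keeping $c_i$ intact'' against a bi-split in which both endpoints $i$ and $j$ land in the \emph{same} part. Since Proposition~\ref{prop:1} already rules out any split that separates $i$ and $j$, I may assume without loss of generality that $i,j \in c_p$, so that after the ICEA/WI the full increment $2\triangle w$ is absorbed by $c_p$: its internal weight becomes $\alpha_{c_p}+2\triangle w$ and its degree sum becomes $\beta_{c_p}+2\triangle w$, while $c_q = c_i\backslash c_p$ is unaffected. This differs from Proposition~\ref{prop:1}, where the increment was distributed as $\triangle w$ per part.

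First I would write the two post-update modularities sharing the common factor $\frac{1}{2m+2\triangle w}$: the ``unchanged'' value $Q^{(t+1)}_{1}$, carrying the single term $(\alpha_{c_i}+2\triangle w)-\frac{(\beta_{c_i}+2\triangle w)^{2}}{2m+2\triangle w}$ for $c_i$, and the ``split'' value $Q^{(t+1)}_{2}$, carrying the two terms for $c_p$ (edge-bearing) and $c_q$. Every community other than $c_i$ cancels in the difference.

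The core computation is to simplify $Q^{(t+1)}_{1}-Q^{(t+1)}_{2}$ using $\beta_{c_i}=\beta_{c_p}+\beta_{c_q}$ and $\alpha_1=\alpha_{c_i}-\alpha_{c_p}-\alpha_{c_q}$. The linear ($\alpha$) part collapses to $\alpha_1$, and the crucial step is the quadratic ($\beta$) part: expanding $-(\beta_{c_p}+\beta_{c_q}+2\triangle w)^{2}+(\beta_{c_p}+2\triangle w)^{2}+\beta_{c_q}^{2}$ telescopes to $-2\beta_{c_q}(\beta_{c_p}+2\triangle w)$, giving
\[
Q^{(t+1)}_{1}-Q^{(t+1)}_{2}=\frac{1}{2m+2\triangle w}\Big(\alpha_1-\frac{2\beta_{c_q}(\beta_{c_p}+2\triangle w)}{2m+2\triangle w}\Big).
\]
A bi-split fails to improve the modularity gain exactly when this quantity is nonnegative; clearing the positive denominators and collecting the $\triangle w$ terms reduces $Q^{(t+1)}_{1}-Q^{(t+1)}_{2}\geq 0$ to $\triangle w(2\beta_{c_q}-\alpha_1)\leq m\alpha_1-\beta_{c_p}\beta_{c_q}$, i.e.\ $\triangle w\leq\frac{m\alpha_1-\beta_{c_p}\beta_{c_q}}{2\beta_{c_q}-\alpha_1}$. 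Equivalently, a bi-split strictly improves the gain if and only if $\triangle w>\frac{m\alpha_1-\beta_{c_p}\beta_{c_q}}{2\beta_{c_q}-\alpha_1}$, so if no bi-split satisfies this strict inequality then none improves upon the unchanged structure, which is the claim.

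The main obstacle is the division step: to pass from the nonnegativity of $Q^{(t+1)}_{1}-Q^{(t+1)}_{2}$ to the stated threshold I must control the sign of $2\beta_{c_q}-\alpha_1$. I would argue it is positive by noting that $\alpha_1$ equals twice the weight of the cut between $c_p$ and $c_q$, whereas $\beta_{c_q}$ counts every edge incident to $c_q$ (each internal edge twice, plus all outgoing edges, which already include that cut), so $2\beta_{c_q}-\alpha_1\geq 2\alpha_{c_q}\geq 0$. The degenerate equality case, where $c_q$ has no internal edges and connects only to $c_p$, can be handled separately: there the difference is manifestly nonnegative because $\beta_{c_p}+2\triangle w\leq 2m+2\triangle w$, so the split never improves and the claim holds vacuously (no admissible bi-split meets the threshold). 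Apart from this sign bookkeeping, the argument is a direct, if delicate, algebraic reduction.
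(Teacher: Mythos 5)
Your proposal is correct and takes essentially the same route as the paper's own proof: invoke Proposition~1 to assume $i,j\in c_{p}$, write the ``unchanged'' and ``split'' modularities with the common factor $\frac{1}{2m+2\triangle w}$, cancel all communities other than $c_{i}$, and reduce $Q^{(t+1)}_{1}-Q^{(t+1)}_{2}$ to the stated threshold on $\triangle w$. If anything, yours is more complete than the published version, which simply asserts the equivalence in a single display without carrying out the expansion and silently divides by $2\beta_{c_{q}}-\alpha_{1}$; your explicit telescoping to $\alpha_{1}-\frac{2\beta_{c_{q}}(\beta_{c_{p}}+2\triangle w)}{2m+2\triangle w}$ and your verification that $2\beta_{c_{q}}-\alpha_{1}\geq 2\alpha_{c_{q}}\geq 0$ (with the degenerate zero case handled separately) repair real omissions in the paper's argument.
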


See Appendix \ref{Appendix_A_2} for the proof.


{\bf Cross-Community EA/WI (CCEA/WI):} Vertices $i$ and $j$ are from two different communities (i.e., $c_{i} \neq c_{j}$). CCEA/WI between $i$ and $j$ leads to three possible operations: (a) keeping the community structure unchanged; (b) merging $c_{i}$ and $c_{j}$ into one community; and (c) splitting $c_{k} = c_{i} \cup c_{j}$ into other smaller communities. 
According to Proposition~\ref{prop:4}, if $\triangle w$ is large enough, merging $c_{i}$ and $c_{j}$ into one community (e.g., $c_{k}$) provides higher modularity gain than keeping the community structure unchanged. However, if $\triangle w$ is too large (as shown in Proposition~\ref{prop:5}), CCEA/WI is equivalent to a two-step process: (a) CCEA/WI between $i$ and $j$ ($c_{i} \neq c_{j}$), that results in merging $c_{i}$ and $c_{j}$ into one community $c_{k}$ (Proposition~\ref{prop:4}); (b) ICEA/WI between $i$ and $j$ ($c_{i} = c_{j} = c_{k}$). Proposition~\ref{prop:5} provides a bi-split condition. 
However, Proposition~\ref{prop:5} also requires checking all bi-split combinations of $c_{k}$. Hence, to deal with CCEA/WI, we propose: (a) if $\triangle w \leq \frac{1}{2} \big( \alpha_{2} + \beta_{2} - 2m + \sqrt{(2m - \alpha_{2} - \beta_{2})^{2} + 4(m\alpha_{2}+\beta_{c_{i}}\beta_{c_{j}}}) \big)$, where $\alpha_{2}=\alpha_{c_{i}}+\alpha_{c_{j}}-\alpha_{c_{k}}$ and $\beta_{2}=\beta_{c_{i}} + \beta_{c_{j}}$, we keep the community structure unchanged; (b) otherwise, we employ the
same initialization approach proposed to deal with ICEA/WI on $c_{k} = c_{i} \cup c_{j}$, where we consider ICEA/WI has happened between vertices $i$ and $j$, where $c_{i}=c_{j}=c_{k}$.

\begin{prop} \label{prop:4} (CCEA/WI Community Merge)
After CCEA/WI between $i$ and $j$, where $c_{i} \neq c_{j}$, if and only if $\triangle w > \frac{1}{2} \big( \alpha_{2} + \beta_{2} - 2m + \sqrt{(2m - \alpha_{2} - \beta_{2})^{2} + 4(m\alpha_{2}+\beta_{c_{i}}\beta_{c_{j}}}) \big)$, where $\alpha_{2}=\alpha_{c_{i}}+\alpha_{c_{j}}-\alpha_{c_{k}}$ and $\beta_{2}=\beta_{c_{i}} + \beta_{c_{j}}$, merging $c_{i}$ and  $c_{j}$ into $c_{k}$ (i.e., $c_{k} = c_{i} \cup c_{j}$) has higher modularity gain than keeping the community structure unchanged.
\end{prop}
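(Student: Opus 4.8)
The plan is to directly compare the modularity after the cross-community change under the two candidate structures --- ``unchanged'' (keeping $c_i \neq c_j$) versus ``merged'' ($c_k = c_i \cup c_j$) --- and to show that their difference is positive exactly when $\triangle w$ exceeds the stated threshold. This is structurally identical to the computation in the proof of Proposition~\ref{prop:1}, so I would reuse that bookkeeping. First I would write both new modularity values over the updated total weight $2m+2\triangle w$. In the unchanged structure the new edge is internal to neither community, so $\alpha_{c_i},\alpha_{c_j}$ are untouched while $\beta_{c_i}\mapsto\beta_{c_i}+\triangle w$ and $\beta_{c_j}\mapsto\beta_{c_j}+\triangle w$; in the merged structure all edges between $c_i$ and $c_j$ become internal, so the internal weight becomes $\alpha_{c_k}+2\triangle w$ (with $\alpha_{c_k}$ the merged internal weight at time $t$) and the degree sum becomes $\beta_{c_i}+\beta_{c_j}+2\triangle w$. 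Every community other than $c_i,c_j$ contributes identically to both expressions and therefore cancels in the difference.

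Next I would simplify that difference. After factoring out the common $\frac{1}{2m+2\triangle w}$, the only nonroutine step is the identity
\begin{equation*}
(\beta_{c_i}+\beta_{c_j}+2\triangle w)^2-(\beta_{c_i}+\triangle w)^2-(\beta_{c_j}+\triangle w)^2 = 2(\triangle w+\beta_{c_i})(\triangle w+\beta_{c_j}),
\end{equation*}
which collapses the degree terms. Clearing the remaining positive denominator $m+\triangle w$, the sign of the modularity difference is governed by the upward quadratic $f(\triangle w)=\triangle w^{2}+(2m-\alpha_2-\beta_2)\triangle w-(m\alpha_2+\beta_{c_i}\beta_{c_j})$, with $\alpha_2=\alpha_{c_i}+\alpha_{c_j}-\alpha_{c_k}$ and $\beta_2=\beta_{c_i}+\beta_{c_j}$ exactly as in the statement. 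Solving $f(\triangle w)=0$ yields the two roots $\tfrac12\big[(\alpha_2+\beta_2-2m)\pm\sqrt{(2m-\alpha_2-\beta_2)^2+4(m\alpha_2+\beta_{c_i}\beta_{c_j})}\big]$, and the ``$+$'' root is precisely the threshold claimed.

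The delicate point --- and the step I expect to be the main obstacle for a clean ``if and only if'' --- is arguing that on the physically meaningful domain $\triangle w>0$ the sign of $f$ flips exactly at this positive root. Since $f$ opens upward, it suffices to show the smaller root is non-positive, equivalently $f(0)\le 0$. I would obtain this from the optimality of the community structure at time $t$: keeping $c_i$ and $c_j$ separate was (approximately) modularity-optimal, so merging them at the original weights does not increase modularity. Writing $E_{ij}$ for the total cross weight between $c_i$ and $c_j$ (so $\alpha_2=-2E_{ij}$), that optimality condition is exactly $f(0)=2mE_{ij}-\beta_{c_i}\beta_{c_j}\le 0$, the direct analogue of the optimality inequality~(\ref{eq:1.5}) used in Proposition~\ref{prop:1}.

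With $f(0)\le 0$ and $f(\triangle w)\to+\infty$, the larger root is the unique positive zero of $f$, so for every admissible $\triangle w>0$ we have $Q^{(t+1)}_{\text{merge}}>Q^{(t+1)}_{1}\iff f(\triangle w)>0\iff \triangle w$ exceeds the stated threshold, which establishes both directions. The remaining work is purely the algebraic expansion in the second step; I do not anticipate any difficulty there beyond careful tracking of which $\alpha$ and $\beta$ terms change, and the root-selection/optimality argument is the only part requiring genuine justification rather than calculation.
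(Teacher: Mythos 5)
Your proposal is correct, and its computational core is the same as the paper's: the paper's proof writes $Q^{(t+1)}_{1}$ (unchanged) and $Q^{(t+1)}_{2}$ (merged) over the updated weight $2m+2\triangle w$ exactly as you do in \eqref{eq:4.1}--\eqref{eq:4.2}, cancels the untouched communities, and reduces the comparison to the sign of $\alpha_{2}-2\triangle w+\frac{(\beta_{c_i}+\triangle w)(\beta_{c_j}+\triangle w)}{m+\triangle w}$, which after clearing $m+\triangle w>0$ is precisely your upward quadratic $f(\triangle w)$. Where you genuinely go beyond the paper is the root-selection step: the paper's \eqref{eq:4.3} simply asserts the equivalence between $\triangle w$ exceeding the stated threshold and the quadratic inequality, without ever discussing why the larger root is the relevant one, i.e.\ why $f(\triangle w)>0$ on the domain $\triangle w>0$ cannot also occur below the smaller root. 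Your argument that merge-optimality of the time-$t$ structure gives $f(0)=2mE_{ij}-\beta_{c_i}\beta_{c_j}\le 0$ (the exact analogue of inequality \eqref{eq:1.5}, which the paper invokes in Proposition~\ref{prop:1} but omits here) closes that gap cleanly; it additionally guarantees the discriminant $(2m-\alpha_{2}-\beta_{2})^{2}+4(m\alpha_{2}+\beta_{c_i}\beta_{c_j})$ is nonnegative, so the threshold is a well-defined real number, a point the paper also leaves implicit. One further detail in your favor: the paper's \eqref{eq:4.3} writes $m\alpha_{1}$ inside the radical while the proposition statement has $m\alpha_{2}$; your derivation, carried out consistently with $\alpha_{2}=\alpha_{c_i}+\alpha_{c_j}-\alpha_{c_k}$, matches the statement and implicitly corrects this typographical slip. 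In short, same decomposition and algebra, but your writeup supplies the optimality-based justification of the ``only if'' direction that the paper's proof takes for granted.
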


See Appendix \ref{Appendix_A_3} for the proof.


\begin{prop} \label{prop:5} (CCEA/WI Community Bi-split)
After CCEA/WI between $i$ and $j$, where $c_{i} \neq c_{j}$, $c_{k} = c_{i} \cup c_{j}$, and $\{c_{p}$, $c_{q}\}$ is another bi-split of $c_{k}$ (i.e., $c_{p} \subseteq c_{k}$ and $c_{q} = c_{k} \backslash c_{p}$), if and only if $\triangle w > \frac{1}{2} \big( \alpha_{2} + \beta_{2} - 2m + \sqrt{(2m - \alpha_{2} - \beta_{2})^{2} + 4(m\alpha_{2}+\beta_{c_{i}}\beta_{c_{j}}}) \big) + \frac{m \alpha_{1}-\beta_{c_{p}}\beta_{c_{q}}}{2\beta_{c_{q}}-\alpha_{1}}$,
where $\alpha_{1}=\alpha_{c_{i}} - \alpha_{c_{p}} - \alpha_{c_{q}}$, $\alpha_{2}=\alpha_{c_{i}}+\alpha_{c_{j}}-\alpha_{c_{k}}$ and $\beta_{2}=\beta_{c_{i}} + \beta_{c_{j}}$, splitting $c_{k}$ into $c_{p}$ and $c_{q}$ has higher modularity gain than either keeping the community structure unchanged or merging $c_{i}$ and  $c_{j}$ into $c_{k}$.
\end{prop}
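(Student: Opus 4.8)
The plan is to compare the three community structures that a cross-community weight increase can produce at time $t+1$ and to show that the full split dominates the other two precisely when $\triangle w$ crosses the stated threshold. Write $Q^{(t+1)}_{1}$ for the modularity of the unchanged structure (edge $(i,j)$ still crossing $c_{i}$ and $c_{j}$, as in Eq.~(\ref{eq:4.1})), $Q^{(t+1)}_{2}$ for the merged structure $c_{k}=c_{i}\cup c_{j}$ (Eq.~(\ref{eq:4.2})), and $Q^{(t+1)}_{3}$ for the structure in which $c_{k}$ is additionally bi-split into $c_{p}$ and $c_{q}$. Since after the merge the edge $(i,j)$ is intra-community, Proposition~\ref{prop:1} forces $i$ and $j$ to stay together under any beneficial split, so I may take $i,j\in c_{p}$; the added weight then contributes $+2\triangle w$ to both $\alpha_{c_{p}}$ and $\beta_{c_{p}}$ in $Q^{(t+1)}_{3}$, and all three expressions share the normalization $2m+2\triangle w$ together with the identical untouched terms $\sum_{c\neq c_{i},c_{j}}$. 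The target is to prove $Q^{(t+1)}_{3}>Q^{(t+1)}_{1}$ and $Q^{(t+1)}_{3}>Q^{(t+1)}_{2}$ simultaneously, and to identify the governing threshold as $T_{1}+T_{2}$, where $T_{1}$ is the merge threshold of Proposition~\ref{prop:4} and $T_{2}=\frac{m\alpha_{1}-\beta_{c_{p}}\beta_{c_{q}}}{2\beta_{c_{q}}-\alpha_{1}}$ is the bi-split threshold of Proposition~\ref{prop:2}, with $\alpha_{1}=\alpha_{c_{k}}-\alpha_{c_{p}}-\alpha_{c_{q}}$ the inter-part edge weight of the bi-split.

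I would exploit the two-step decomposition that the statement suggests: a cross-community increase first triggers the merge of $c_{i}$ and $c_{j}$ (Proposition~\ref{prop:4}), and then acts as an intra-community increase on $c_{k}$ that may trigger a bi-split (Proposition~\ref{prop:2}). The split-versus-merge comparison is the easy half: subtracting Eq.~(\ref{eq:4.2}) from $Q^{(t+1)}_{3}$ cancels every term outside $c_{k}$, leaves $\alpha_{c_{p}}+\alpha_{c_{q}}-\alpha_{c_{k}}=-\alpha_{1}$ in the numerator, and, using $\beta_{c_{k}}=\beta_{c_{p}}+\beta_{c_{q}}$, collapses the quadratic $\beta$-terms into $2(\beta_{c_{p}}+2\triangle w)\beta_{c_{q}}$; after clearing the positive factor $2m+2\triangle w$ this is exactly the Proposition~\ref{prop:2} inequality $\triangle w(2\beta_{c_{q}}-\alpha_{1})>m\alpha_{1}-\beta_{c_{p}}\beta_{c_{q}}$, i.e. $Q^{(t+1)}_{3}>Q^{(t+1)}_{2}\Leftrightarrow \triangle w>T_{2}$. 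The merge-versus-unchanged comparison is supplied verbatim by Proposition~\ref{prop:4} through Eq.~(\ref{eq:4.3}), namely $Q^{(t+1)}_{2}>Q^{(t+1)}_{1}\Leftrightarrow \triangle w>T_{1}$.

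The hard part is assembling these two reductions into the single additive threshold $T_{1}+T_{2}$ rather than a maximum. Because beating the merge already requires only $\triangle w>T_{2}$, the binding constraint must be the split-versus-unchanged comparison, so the crux is to prove $Q^{(t+1)}_{3}>Q^{(t+1)}_{1}\Leftrightarrow \triangle w>T_{1}+T_{2}$; granting this, ``split beats both'' holds iff $\triangle w>\max(T_{2},T_{1}+T_{2})=T_{1}+T_{2}$, which in turn forces the chain $Q^{(t+1)}_{3}>Q^{(t+1)}_{2}>Q^{(t+1)}_{1}$. Rather than solving the resulting quadratic in $\triangle w$ from scratch, I would verify the boundary identity $Q^{(t+1)}_{3}=Q^{(t+1)}_{1}$ at $\triangle w=T_{1}+T_{2}$ by substituting the two defining relations, the merge-boundary equation $-\alpha_{2}+2T_{1}-\frac{(\beta_{c_{i}}+T_{1})(\beta_{c_{j}}+T_{1})}{m+T_{1}}=0$ for $T_{1}$ and $T_{2}(2\beta_{c_{q}}-\alpha_{1})=m\alpha_{1}-\beta_{c_{p}}\beta_{c_{q}}$ for $T_{2}$, and then invoke the upward-opening shape of the quadratic (its leading coefficient reduces to $2$ after clearing denominators) to fix the sign on either side.

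The delicate step I expect to dominate the work is showing that the $\beta_{c_{i}}\beta_{c_{j}}$ contribution of the unchanged configuration recombines, through the square-root term inside $T_{1}$, with the $\beta_{c_{p}}\beta_{c_{q}}$ term of $T_{2}$ so that the boundary identity closes; this is where the two otherwise-independent bi-splits $\{c_{i},c_{j}\}$ and $\{c_{p},c_{q}\}$ of $c_{k}$ must interact. I would also record the standing assumption that $T_{1}>0$ (the non-degenerate case in which the merge is not automatic), which guarantees $T_{1}+T_{2}\ge T_{2}$ and hence that the unchanged structure, not the merged one, supplies the active threshold; together with $2\beta_{c_{q}}-\alpha_{1}>0$ this keeps every division by a positive quantity legitimate throughout.
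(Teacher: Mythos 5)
Your two preliminary reductions are correct and are exactly the building blocks the paper has in mind: split-versus-merge cancels everything outside $c_{k}$ and collapses to $\triangle w(2\beta_{c_{q}}-\alpha_{1})>m\alpha_{1}-\beta_{c_{p}}\beta_{c_{q}}$, i.e.\ $\triangle w>T_{2}$ (and you are right that $\alpha_{1}$ must be read as $\alpha_{c_{k}}-\alpha_{c_{p}}-\alpha_{c_{q}}$; the statement's $\alpha_{c_{i}}$ is a typo), while merge-versus-unchanged is Proposition~\ref{prop:4} verbatim. Note the paper's own ``proof'' is the single sentence that the claim ``could be easily derived'' from Propositions~\ref{prop:2} and~\ref{prop:4}, so you are attempting strictly more than the paper does. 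But the step you defer --- the boundary identity $Q^{(t+1)}_{3}=Q^{(t+1)}_{1}$ at $\triangle w=T_{1}+T_{2}$ --- is false, and this is precisely where the attempt cannot close. Subtracting the unchanged configuration from the split one directly gives: split beats unchanged iff
\[
f(\triangle w)=\triangle w^{2}+\big(2m-\alpha_{1}-\alpha_{2}+\beta_{c_{q}}-\beta_{c_{p}}\big)\triangle w+\beta_{c_{p}}\beta_{c_{q}}-\beta_{c_{i}}\beta_{c_{j}}-m(\alpha_{1}+\alpha_{2})>0,
\]
i.e.\ iff $\triangle w$ exceeds the larger root $\rho$ of $f$, so ``split beats both'' holds iff $\triangle w>\max\{T_{2},\rho\}$. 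There is no algebraic mechanism forcing $\max\{T_{2},\rho\}=T_{1}+T_{2}$: the additive form comes from composing the two thresholds as if the second (bi-split) step still took place in a graph of total weight $m$ with the pre-change parameters, whereas after the conceptual merge the normalization is already $m+\triangle w$.

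A concrete weighted graph kills the identity while satisfying all of your standing assumptions. Take vertices $\{1,\dots,6\}$ with $e_{12}=e_{34}=3$, $e_{23}=1$, $e_{14}=0.5$, $e_{15}=e_{25}=e_{36}=e_{46}=1$, $e_{56}=10$; let $c_{i}=\{1,2\}$, $c_{j}=\{3,4\}$, increase the weight of $e_{23}$ (so $i=2$, $j=3$), and take the bi-split $c_{p}=\{2,3\}$, $c_{q}=\{1,4\}$ (so $i,j\in c_{p}$, as your appeal to Proposition~\ref{prop:1} requires). Then $m=21.5$, $\alpha_{2}=-3$, $\beta_{c_{i}}=\beta_{c_{j}}=9.5$, $\alpha_{1}=12$, $\beta_{c_{p}}=10$, $\beta_{c_{q}}=9$, giving $T_{1}=\tfrac{1}{2}(\sqrt{832}-27)\approx 0.92>0$, $2\beta_{c_{q}}-\alpha_{1}=6>0$, $T_{2}=168/6=28$, hence $T_{1}+T_{2}\approx 28.92$, while $\rho=\tfrac{1}{2}(\sqrt{1864}-33)\approx 5.09$. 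At $\triangle w=28.5$ the split strictly beats both the unchanged and the merged structures ($28.5\cdot 6=171>168$ and $f(28.5)>0$), even though $28.5<T_{1}+T_{2}$: the ``only if'' direction fails, the true three-way threshold here is $T_{2}=28$, and the actual split-versus-unchanged boundary sits near $5.09$, nowhere near $28.92$. So the gap is not a missing computation in your plan but the plan's central claim itself; your own strategy, executed honestly, terminates at $\triangle w>\max\{T_{2},\rho\}$ and thereby shows that the proposition's additive threshold is at best a one-sided (sufficient, given $T_{1}\geq 0$) condition --- a defect that the paper's one-line proof never confronts.
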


The proof could be easily derived from Proposition~\ref{prop:2} and Proposition~\ref{prop:4}.

\subsubsection{Edge Deletion/Weight Decrease (ED/WD)} \label{sec:DynaMoAlgorithm_EDWD}
In this scenario, an edge $(i, j, w_{ij})$ between two existing vertices $i$ and $j$ has been changed to $(i, j, w_{ij}-\triangle w)$, where $w_{ij} \geq \triangle w > 0$. Edge deletion is a special case of edge weight decrease, where $w_{ij} = \triangle w$. 
Depending on the edge property, we define two sub-scenarios:

{\bf Intra-Community ED/WD (ICED/WD):} Vertices $i$ and $j$ belong to the same community (i.e., $c_{i}=c_{j}$). According to Proposition~\ref{prop:3}, if $i$ or $j$ has one degree, decreasing the edge weight between $i$ and $j$ will keep the community structure unchanged. Also, intuitively, if $i$ or $j$ has one degree, deleting the edge between $i$ and $j$ will result in the same community structure plus one or two singleton communities (i.e., the vertex of one degree becomes singleton community). Except for the case above (i.e., $i$ or $j$ has one degree), ICED/WD between $i$ and $j$ leads to three other possible operations: (a) keeping the community structure unchanged, if $c_{i}$ is still densely connected; (b) splitting $c_{i}$ into multiple smaller communities, if $c_{i}$ becomes sparsely connected; and (c) merging $c_{i}$ with some of its neighbor communities (i.e., the opposite situation of Remark~\ref{remark:1}). Since the analytical approach is complex and time consuming, we propose to initiate all vertices within the communities, that adjacent to $i$ or $j$ (including $c_{i}$), as singleton communities.

\begin{prop} \label{prop:3}
For any pair of vertices $i$, $j$ that belong to the same community (i.e., $c_{i}=c_{j}$), if $i$ or $j$ has only one neighbor vertex ($j$ or $i$), decreasing the edge weight between $i$ and $j$, does not split $i$ and $j$ into different communities.
\end{prop}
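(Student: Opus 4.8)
The plan is to prove the statement directly, by showing that after the weight decrease no modularity-optimal partition can place the degree-one vertex in a different community from its unique neighbor. Following the template of Proposition~\ref{prop:1}, I would let $Q^{(t+1)}_{1}$ denote the modularity when $i$ and $j$ remain together in $c_{i}$, and $Q^{(t+1)}_{2}$ the modularity under an arbitrary split $c^{\prime}_{i} \ni i$, $c^{\prime}_{j} = c_{i} \backslash c^{\prime}_{i} \ni j$. Without loss of generality assume $i$ is the degree-one vertex, so its only incident edge is $(i,j)$ and after the decrease $k_{i} = w_{ij} - \triangle w$ while the total weight becomes $2m - 2\triangle w$.

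Rather than bound $Q^{(t+1)}_{1} - Q^{(t+1)}_{2}$ through the time-$t$ optimality (as in Proposition~\ref{prop:1}), which does not suffice here because a generic intra-community weight decrease genuinely can induce a split, I would exploit the degree-one hypothesis through a single relocation argument. Fix any split separating $i$ and $j$ and write $A = c^{\prime}_{i} \backslash \{i\}$ and $B = c^{\prime}_{j}$. Because $i$'s only edge goes to $j \in B$, moving $i$ out of $c^{\prime}_{i}$ and into $B$ leaves the internal edge sum $\alpha_{A}$ untouched, inserts the single edge $(i,j)$ into $B$'s internal edges (contributing $2k_{i}$ under the factor-two convention of $\alpha_{c}$), and shifts $k_{i}$ of degree from $A$'s side to $B$'s side. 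Computing the resulting modularity change, the quadratic $k_{i}^{2}$ penalty terms cancel exactly, and I expect a clean expression of the form $\frac{k_{i}}{2(m-\triangle w)}\big(2 + \frac{\beta_{A} - \beta_{B}}{m-\triangle w}\big)$.

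The final step is to observe that this quantity is nonnegative: $k_{i} = w_{ij} - \triangle w \geq 0$, and since $B$ is a vertex subset its degree sum satisfies $\beta_{B} \leq 2(m-\triangle w)$, so $2(m-\triangle w) + \beta_{A} - \beta_{B} \geq \beta_{A} \geq 0$. Hence relocating the degree-one vertex into its neighbor's community never decreases modularity, which rules out any optimal partition that separates $i$ and $j$; taking $A = \emptyset$ in the same computation also dominates the case of $i$ becoming a singleton, so $i$ will not be split off from $j$ at all. The symmetric argument (swapping the roles of $i$ and $j$) covers the case where $j$ is the degree-one vertex, completing the proof.

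I would flag two points as the main obstacles. First, getting the relocation bookkeeping exactly right is where the degree-one hypothesis does all the work: one must verify that $\alpha_{A}$ is genuinely unchanged and that the $k_{i}^{2}$ terms truly cancel, and it is easy to mis-track whether the edge $(i,j)$ is counted once or twice in $\alpha_{B}$. Second, I would make explicit why the weaker time-$t$ optimality bound used in Proposition~\ref{prop:1} is insufficient here, so it is clear the degree-one assumption is necessary rather than merely convenient, since a higher-degree vertex can indeed be pulled away by a weight decrease, which is precisely the splitting behavior anticipated in the ICED/WD discussion above.
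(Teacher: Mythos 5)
Your proof is correct, and it takes a genuinely different route from the paper's. The paper argues globally: it writes out $Q^{(t+1)}_{1}$ (unchanged) and $Q^{(t+1)}_{2}$ (best-case split $c^{\prime}_{i}$, $c^{\prime}_{j}$) in full, reduces their difference to $\frac{(w^{\prime}_{ij}-\triangle w)\big((2m-\alpha_{c_{i}})+(w^{\prime}_{ij}-\triangle w)\big)-\alpha_{c^{\prime}_{i}}\alpha_{c^{\prime}_{j}}}{2(m-\triangle w)^{2}}$ via the substitutions $w^{\prime}_{ij}=\tfrac{1}{2}(\alpha_{c_{i}}-\alpha_{c^{\prime}_{i}}-\alpha_{c^{\prime}_{j}})$, $\beta_{c^{\prime}_{i}}=\alpha_{c^{\prime}_{i}}+w^{\prime}_{ij}$, $\beta_{c^{\prime}_{j}}=\alpha_{c^{\prime}_{j}}+w^{\prime}_{ij}$, and then uses the degree-one hypothesis to drop the $\alpha_{c^{\prime}_{i}}\alpha_{c^{\prime}_{j}}$ term (effectively reducing to the singleton split) and concludes positivity from $w^{\prime}_{ij}>\triangle w$ and $2m>\alpha_{c_{i}}$. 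You instead run a local exchange argument: fix an arbitrary separating partition and relocate the degree-one vertex $i$ from $A\cup\{i\}$ into $B=c^{\prime}_{j}$; your bookkeeping is right ($\alpha_{A}$ untouched, $\alpha_{B}$ gains $2k_{i}$ under the ordered-pair convention, $k_{i}$ of degree shifts sides, the $k_{i}^{2}$ terms cancel), and the gain $\frac{k_{i}}{2(m-\triangle w)}\big(2+\frac{\beta_{A}-\beta_{B}}{m-\triangle w}\big)$ checks out. Your version buys two things the paper's does not: it handles arbitrary splits with $A\neq\emptyset$ explicitly rather than implicitly assuming the singleton split is the best case, and it avoids the paper's tacit simplification $\beta_{c^{\prime}}=\alpha_{c^{\prime}}+w^{\prime}_{ij}$, which presumes $c_{i}$ has no edges to other communities; the paper's calculation, in exchange, yields a closed form for the modularity gap. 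One small tightening you should make: nonnegativity alone does not ``rule out'' an optimal separating partition, only shows a non-separating one is at least as good. But since $i\notin B$ carries degree $k_{i}$, in fact $\beta_{B}\leq 2(m-\triangle w)-k_{i}$, so for a proper weight decrease ($w_{ij}>\triangle w$, i.e.\ $k_{i}>0$, exactly the regime the paper's proof uses via $w^{\prime}_{ij}>\triangle w$) your relocation gain is strictly positive and the conclusion holds as stated; in the deletion limit $k_{i}=0$ the gain degenerates to zero, consistent with the paper's remark that deletion leaves the structure unchanged up to new singletons.
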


See Appendix \ref{Appendix_A_5} for the proof.

{\bf Cross-Community ED/WD (CCED/WD):} Vertices $i$ and $j$ are from two different communities (i.e., $c_{i} \neq c_{j}$). By Proposition~\ref{prop:6}, CCED/WD strengthens the community structure, thus, keeping the community structure unchanged.

\begin{prop} \label{prop:6}
If vertices $i$ and $j$ are from different communities ($c_{i} \neq c_{j}$), deleting an edge or decreasing the edge weight between $i$ and $j$, will increase the modularity gain coming from $c_{i}$ and $c_{j}$.
\end{prop}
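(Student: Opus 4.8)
The plan is to track only the portion of the modularity attributable to $c_i$ and $c_j$ and to show it cannot decrease when the cross-community edge weight is lowered by $\triangle w$. I would write this combined contribution as
\[
g \;=\; \frac{1}{2m}\Big[\Big(\alpha_{c_i}-\frac{\beta_{c_i}^2}{2m}\Big)+\Big(\alpha_{c_j}-\frac{\beta_{c_j}^2}{2m}\Big)\Big],
\]
and first record how each ingredient moves. Since the modified edge is cross-community, it lies in neither $c_i$ nor $c_j$, so $\alpha_{c_i}$ and $\alpha_{c_j}$ are unchanged; it is incident to both $i$ and $j$, so $\beta_{c_i}\mapsto\beta_{c_i}-\triangle w$ and $\beta_{c_j}\mapsto\beta_{c_j}-\triangle w$; and the total weight drops, so $m\mapsto m-\triangle w$. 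Writing $g^{(t)}$ and $g^{(t+1)}$ for the values before and after, the goal reduces to the single inequality $g^{(t+1)}\ge g^{(t)}$.

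To prove it I would treat the decrement as a continuous parameter $x\in[0,\triangle w]$, set $m'=m-x$, $\beta_{c_i}'=\beta_{c_i}-x$, $\beta_{c_j}'=\beta_{c_j}-x$ (the $\alpha$'s fixed), and show $g$ is non-decreasing in $x$; integrating then yields $g^{(t+1)}\ge g^{(t)}$. A short differentiation collapses the sign of $dg/dx$ to the sign of $T_{c_i}+T_{c_j}$, where
\[
T_{c}\;=\;m'\,\alpha_{c}+\beta_{c}'\,(m'-\beta_{c}').
\]
Equivalently, one may form $g^{(t+1)}-g^{(t)}$ directly, clear the common denominator $4m^2(m-x)^2$, and reduce to the same polynomial expression; the continuous version merely keeps the bookkeeping lighter, and this is the step I expect to be routine.

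The one genuinely delicate point, and what I expect to be the main obstacle, is the sign of $\beta_c'(m'-\beta_c')$ when a community carries more than half of the incident weight, i.e. $\beta_c'>m'$: there, shrinking $m$ could a priori inflate the configuration-model penalty faster than lowering $\beta_c$ relieves it. This is exactly where the hypothesis that $c_i$ and $c_j$ are genuine disjoint communities must be used. The resolving fact is a counting bound valid in any weighted snapshot: the external (cross-community) weight of a community cannot exceed the weight incident to the rest of the graph, because each of its cross half-edges is matched to a distinct endpoint outside the community; hence $\beta_c-\alpha_c\le 2m-\beta_c$, and applied to the post-change graph this reads $\alpha_c\ge 2\beta_c'-2m'$, together with the trivial $\beta_c'\le 2m'$. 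Substituting this bound into $T_c$ in the regime $\beta_c'>m'$ turns it into $-(\beta_c'-m')(\beta_c'-2m')\ge0$, while for $\beta_c'\le m'$ both summands of $T_c$ are already nonnegative; so each $T_c\ge0$ and the monotonicity follows. To be precise about the word ``increase'', I would finally note that equality $T_c=0$ forces the degenerate configuration in which $c_i$ and $c_j$ are joined only to one another through the single modified edge, so the gain is strict in every non-degenerate case.
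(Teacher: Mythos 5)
Your proof is correct and reaches the conclusion by a genuinely different route than the paper's. The paper forms the discrete difference $\triangle Q=Q_{i}^{(t+1)}+Q_{j}^{(t+1)}-Q_{i}^{(t)}-Q_{j}^{(t)}$ in one shot, factors each $\beta$-penalty term as a difference of squares $\tfrac{1}{4}D_{c}S_{c}$ with $D_{c}=\tfrac{\beta_{c}}{m}-\tfrac{\beta_{c}-\triangle w}{m-\triangle w}$ and $S_{c}=\tfrac{\beta_{c}}{m}+\tfrac{\beta_{c}-\triangle w}{m-\triangle w}$, discards the nonnegative $\alpha$-term, and lower-bounds $D_{i}S_{i}+D_{j}S_{j}\geq k(D_{i}+D_{j})$ with $k=\min\{S_{i},S_{j}\}$, which collapses to $\tfrac{k\triangle w(2m-\beta_{c_{i}}-\beta_{c_{j}})}{4m(m-\triangle w)}>0$. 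You instead differentiate along the continuous path and reduce to the per-community inequality $T_{c}\geq 0$; up to that point the bookkeeping is equivalent (note $D_{c}=\tfrac{\triangle w(m-\beta_{c})}{m(m-\triangle w)}$, so both arguments hinge on the same quantity $\beta_{c}'(m'-\beta_{c}')$). The real divergence is your ``delicate regime'' $\beta_{c}'>m'$: there $D_{c}<0$, and since $S_{c}$ is increasing in $\beta_{c}$ while $D_{c}$ is decreasing, the negative difference factor is paired with the \emph{larger} sum factor, so the paper's min-trick inequality $D_{i}S_{i}+D_{j}S_{j}\geq k(D_{i}+D_{j})$ can actually fail; the paper's proof is only valid under the tacit assumption $\beta_{c_{i}},\beta_{c_{j}}\leq m$. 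Your counting bound $\beta_{c}-\alpha_{c}\leq 2m-\beta_{c}$ is precisely what closes that case, so your argument is not merely different but strictly more complete: it establishes the proposition even when one community carries more than half of the total incident weight, and your separate $T_{c_{i}},T_{c_{j}}\geq 0$ is stronger than the paper's aggregated bound. Two small corrections to your commentary: the bound $\alpha_{c}\geq 2\beta_{c}'-2m'$ holds for \emph{any} vertex subset by the half-edge matching you describe, so no hypothesis about $c_{i},c_{j}$ being genuine communities is consumed there; and the equality family in your final strictness remark is slightly broader than you state --- it is $\alpha_{c_{i}}=\alpha_{c_{j}}=0$ with $\beta_{c_{i}}=\beta_{c_{j}}=m$, i.e., every edge of the graph running between $c_{i}$ and $c_{j}$ (not necessarily a single edge), which is exactly the configuration silently excluded by the paper's unjustified strict inequality $2m>\beta_{c_{i}}+\beta_{c_{j}}$. (Incidentally, the paper's first term carries a typo, $2m(m+\triangle w)$ in place of $2m(m-\triangle w)$; this does not affect its sign.)
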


See Appendix \ref{Appendix_A_6} for the proof.


\subsubsection{Vertex Addition (VA)}
In this scenario, a new vertex $i$ and its associated edges are added. On one hand, if $i$ has no associated edge, we make it as a singleton community and keep the rest community structure unchanged. On the other hand, if $i$ has one or more associated edges, some interesting cases would happen. For instance, if all of $i$'s associated edges are connected to the same community, i.e., $c_{j}$, by Proposition~\ref{prop:7}, we should merge $i$ into $c_{j}$ and treat all of $i$'s associated edges as ICEA/WI. A more complicated case occurs when $i$'s associated edges are connected to different communities. In this case, by Proposition~\ref{prop:8}, we could merge $i$ into community $c_{j}$ that has the highest $\triangle w_{ij}$. However, other than simply determining which community $i$ should merge into, we should also consider which set of vertices could together with $i$ to form a new community, or which community could be split into smaller communities, to further maximize the modularity. To cope with all the cases, where $i$ has one or more associated edges, we propose to initialize $i$ and $j$ as a two-vertices community, where edge $e_{ij}$ has the highest weight among all of $i$'s associated edges (randomly selecting a vertex $j$ if there are ties), and initialize all the other vertices within $i$'s adjacent communities as singleton communities.

\begin{prop} \label{prop:7}
If a new vertex $i$ has been added and all of its associated edges are connected to the same community, i.e., $c_{j}$, merging $i$ into $c_{j}$ has higher modularity gain than keeping $i$ as a singleton community.
\end{prop}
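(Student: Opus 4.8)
The plan is to mirror the two-case comparison used in Propositions~\ref{prop:1} and~\ref{prop:4}: write the post-change modularity $Q^{(t+1)}_{1}$ for the scenario where $i$ is merged into $c_{j}$ and $Q^{(t+1)}_{2}$ for the scenario where $i$ is kept as a singleton, and then show $Q^{(t+1)}_{1} - Q^{(t+1)}_{2} > 0$. First I would fix the bookkeeping. Let $\triangle w = k_{i}$ be the total weight of all edges incident to the new vertex $i$; since every such edge terminates in $c_{j}$, adding $i$ raises the total edge weight from $2m$ to $2m + 2\triangle w$, i.e., $m \to m + \triangle w$. The only community whose internal statistics change is $c_{j}$: every other community keeps its $\alpha_{c}$ and $\beta_{c}$, so those terms will cancel in the difference.

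The crux of the setup is tracking how $\alpha_{c_{j}}$ and $\beta_{c_{j}}$ respond in each scenario. When $i$ is merged into $c_{j}$, all of $i$'s edges become internal, so $\alpha_{c_{j}} \to \alpha_{c_{j}} + 2\triangle w$ (the factor $2$ from the double counting in $\alpha_{c} = \sum_{p,q \in c} A_{pq}$) and $\beta_{c_{j}} \to \beta_{c_{j}} + 2\triangle w$ (once from the increased degrees of the original members of $c_{j}$, once from $k_{i} = \triangle w$). When $i$ is kept as a singleton, $\alpha_{c_{j}}$ is unchanged, $\beta_{c_{j}} \to \beta_{c_{j}} + \triangle w$ (only the original members' degrees grow), and a new singleton term $-(\triangle w)^{2}/(2m+2\triangle w)$ appears. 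Writing both $Q^{(t+1)}_{1}$ and $Q^{(t+1)}_{2}$ with the common denominator $2m + 2\triangle w$ and subtracting, the unchanged communities cancel and the remaining algebra collapses to
\begin{equation} \label{eq:7.plan}
Q^{(t+1)}_{1} - Q^{(t+1)}_{2} = \frac{2\triangle w\,(2m + \triangle w - \beta_{c_{j}})}{(2m + 2\triangle w)^{2}}.
\end{equation}

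Finally I would read off the sign. Since $\beta_{c_{j}} = \sum_{v \in c_{j}} k_{v} \le 2m$ (the degree sum of any single community cannot exceed the total degree $2m$) and $\triangle w > 0$, the factor $2m + \triangle w - \beta_{c_{j}} > 0$, so the expression in~(\ref{eq:7.plan}) is strictly positive and merging strictly dominates. I expect the only real obstacle to be the bookkeeping in the middle paragraph: one must add $2\triangle w$ (not $\triangle w$) to both $\alpha_{c_{j}}$ and $\beta_{c_{j}}$ under the merge, and must still credit the degree increase of $c_{j}$'s original vertices even in the singleton case. Once the $\alpha$/$\beta$ updates are correct for both scenarios, the cancellation to~(\ref{eq:7.plan}) and the final inequality are routine.
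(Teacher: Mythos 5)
Your proposal is correct and takes essentially the same route as the paper's proof: the identical merge-versus-singleton comparison with the same updates ($\alpha_{c_j}+2\triangle w$ and $\beta_{c_j}+2\triangle w$ under the merge; $\beta_{c_j}+\triangle w$ plus a singleton penalty $-(\triangle w)^2/(2m+2\triangle w)$ otherwise), and your difference formula $\frac{2\triangle w\,(2m+\triangle w-\beta_{c_j})}{(2m+2\triangle w)^2}$ is algebraically identical to the paper's $\frac{\triangle w(2m-\beta_{c_j})+(\triangle w)^2}{2(m+\triangle w)^2}$. The closing sign argument ($\beta_{c_j}\le 2m$, $\triangle w>0$) likewise matches the paper's.
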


See Appendix \ref{Appendix_A_7} for the proof.


\begin{prop} \label{prop:8}
Suppose a new vertex $i$ has been added and its associated edges are connected to different communities. Let $\triangle w_{ij}$ denote the sum of the edge weights of vertex $i$'s associated edges that are connected to community $c_{j}$. Given two communities $c_{p}$ and $c_{q}$, if $\triangle w_{ip} > \triangle w_{iq}$, merging $i$ into $c_{p}$ has more modularity gain than merging $i$ into $c_{q}$.
\end{prop}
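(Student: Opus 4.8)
The plan is to follow the template of Propositions~\ref{prop:4} and~\ref{prop:7}: write the two candidate post-update modularity values, $Q^{(t+1)}_{1}$ for merging $i$ into $c_{p}$ and $Q^{(t+1)}_{2}$ for merging $i$ into $c_{q}$, and show that their difference carries the sign of $\triangle w_{ip}-\triangle w_{iq}$. Since both scenarios share the same baseline (the pre-addition structure, equivalently $i$ held as a singleton), comparing the two modularity \emph{gains} is the same as comparing $Q^{(t+1)}_{1}$ and $Q^{(t+1)}_{2}$ directly.

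First I would fix the bookkeeping. Let $\triangle w=\sum_{j}\triangle w_{ij}$ be the total weight of $i$'s incident edges, so the new edge-weight total is $m+\triangle w$ and the normalizer becomes $2m+2\triangle w$. The key accounting point---already used implicitly in Proposition~\ref{prop:7}---is that inserting $i$'s edges raises the degree sum of \emph{every} adjacent community $c$ by $\triangle w_{ic}$ regardless of where $i$ is placed; additionally placing $i$ in a community adds $i$'s own degree $\triangle w$ to that community's $\beta$ and turns its $\triangle w_{ic}$ incident edges into internal edges, raising its $\alpha$ by $2\triangle w_{ic}$. Using the per-community decomposition in Eq.~(\ref{eq:modularity}), the community $c_{p}\cup\{i\}$ contributes $(\alpha_{c_{p}}+2\triangle w_{ip})-(\beta_{c_{p}}+\triangle w_{ip}+\triangle w)^{2}/(2m+2\triangle w)$ while $c_{q}$ contributes $\alpha_{c_{q}}-(\beta_{c_{q}}+\triangle w_{iq})^{2}/(2m+2\triangle w)$, and symmetrically for $Q^{(t+1)}_{2}$.

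Second I would subtract. The contributions of all communities other than $c_{p}$ and $c_{q}$ are identical in the two scenarios and cancel, as do the bare $\alpha_{c_{p}},\alpha_{c_{q}}$ terms, leaving
\[
Q^{(t+1)}_{1} - Q^{(t+1)}_{2} = \frac{1}{2(m+\triangle w)}\left( 2(\triangle w_{ip} - \triangle w_{iq}) - \frac{\triangle w\,(\beta_{c_{p}} - \beta_{c_{q}} + \triangle w_{ip} - \triangle w_{iq})}{m+\triangle w} \right).
\]
Grouping the $\triangle w_{ip}-\triangle w_{iq}$ terms rewrites the bracket as $(\triangle w_{ip}-\triangle w_{iq})\big(2-\tfrac{\triangle w}{m+\triangle w}\big)-\tfrac{\triangle w(\beta_{c_{p}}-\beta_{c_{q}})}{m+\triangle w}$; since $\triangle w<m+\triangle w$, the first factor exceeds $1$, so the edge-weight advantage $\triangle w_{ip}>\triangle w_{iq}$ contributes positively.

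The hard part is the residual degree-penalty term $\triangle w(\beta_{c_{p}}-\beta_{c_{q}})/(m+\triangle w)$, which does not vanish and carries the sign of $\beta_{c_{p}}-\beta_{c_{q}}$. This is unlike Propositions~\ref{prop:1},~\ref{prop:6} and~\ref{prop:7}, where the only surviving global quantity is controlled cleanly by the bound $2m\ge\beta_{c}$; here two community degree sums are compared, and a much heavier $c_{p}$ can overwhelm its edge-weight advantage (a resolution-limit effect). To close the argument I therefore expect to need an extra hypothesis, most cleanly $\beta_{c_{p}}\le\beta_{c_{q}}$ (or comparable community masses): under it the residual term is non-negative and $Q^{(t+1)}_{1}>Q^{(t+1)}_{2}$ is immediate. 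I would present the clean inequality under this assumption and flag the degree-penalty term as the genuine obstruction in full generality; operationally this borderline case is exactly what the surrounding two-vertex/singleton initialization re-examines, so the greedy ``merge into the largest $\triangle w_{ij}$'' rule loses nothing.
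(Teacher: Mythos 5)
Your bookkeeping is the correct one, and the residual degree-penalty term you isolate is a genuine obstruction rather than a defect of your argument; in fact your derivation is sounder than the paper's. The paper's own proof uses different (and flawed) accounting: it writes the merged community's penalty as $(\beta_{c_{p}}+2\triangle w_{ip})^{2}/(2m+2\triangle w)$, crediting vertex $i$'s degree to $c_{p}$ as only $\triangle w_{ip}$ instead of its full degree $\triangle w$ --- an over-generalization of Proposition~\ref{prop:7}, where $\triangle w_{ij}=\triangle w$ happens to hold; your $(\beta_{c_{p}}+\triangle w_{ip}+\triangle w)^{2}$ is the right term. More importantly, the paper then reduces the difference to a numerator $(4m-2\beta_{c_{p}})\triangle w_{ip}-(4m-2\beta_{c_{q}})\triangle w_{iq}+(4\triangle w-3\triangle w_{ip})\triangle w_{ip}-(4\triangle w-3\triangle w_{iq})\triangle w_{iq}$ and bounds it below by $(k_{1}+k_{2})(\triangle w_{ip}-\triangle w_{iq})$ with $k_{1},k_{2}$ the minima of the paired coefficients. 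That step tacitly uses $ax-by\geq\min(a,b)\,(x-y)$ for $x>y\geq 0$, which fails exactly when the larger variable carries the smaller coefficient, i.e., when $\beta_{c_{p}}>\beta_{c_{q}}$ (and the $k_{2}$ pairing fails similarly, since $\triangle w_{ip}>\triangle w_{iq}$ forces $k_{2}=4\triangle w-3\triangle w_{ip}$). So the paper ``closes'' unconditionally only by an invalid inequality, precisely at the heavy-$c_{p}$ case you flagged.

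Your suspicion that the statement needs an extra hypothesis is correct: as stated, the proposition is false. Take $c_{p}$ a $K_{10}$ clique ($\alpha_{c_{p}}=90$, $\beta_{c_{p}}=90$), $c_{q}$ a single edge ($\beta_{c_{q}}=2$), so $m=46$, and add $i$ with two unit edges into the clique and one into the pair, so $\triangle w_{ip}=2>1=\triangle w_{iq}$ and $\triangle w=3$. Direct computation gives $Q^{(t+1)}_{1}\approx 0.039$ for merging into $c_{p}$ versus $Q^{(t+1)}_{2}\approx 0.074$ for merging into $c_{q}$, and the gap $-0.0352$ agrees exactly with your closed form, since $\triangle w(\beta_{c_{p}}-\beta_{c_{q}})/(m+\triangle w)=264/49$ dwarfs $(\triangle w_{ip}-\triangle w_{iq})\bigl(2-\tfrac{\triangle w}{m+\triangle w}\bigr)\approx 1.94$. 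Indeed your display yields the exact criterion: merging into $c_{p}$ wins if and only if $(\triangle w_{ip}-\triangle w_{iq})(2m+\triangle w)>\triangle w(\beta_{c_{p}}-\beta_{c_{q}})$, of which your sufficient condition $\beta_{c_{p}}\leq\beta_{c_{q}}$ is the clean special case; under that hypothesis your proof is complete. Your closing observation is also apt: the algorithm's initialization (two-vertex community plus singletons, re-optimized by the Louvain phase) does not actually rely on the greedy rule being correct in these resolution-limit cases, so the algorithmic consequences are mild --- but as a reading of Proposition~\ref{prop:8} itself, your analysis stands and the paper's proof does not.
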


See Appendix \ref{Appendix_A_8} for the proof.

\subsubsection{Vertex Deletion (VD)}
In this scenario, an old vertex $i$ and its associated edges are deleted. On one hand, if $i$ has no associated edge, deleting $i$ has no influence on the rest of the network, and hence, we should keep the community structure unchanged. On the other hand, if $i$ has too many associated edges, deleting $i$ might cause its community and its neighbor communities being broken into smaller communities and potentially being merged into other communities. To handle this case, we propose to initialize all the vertices within $c_{i}$ and $i$'s neighbor communities as singleton communities.

\begin{algorithm}[t]
\caption{DynaMo Initialization (Init)}\label{alg:DynaMoInitialization}
\LinesNumbered
\KwIn{$V^{(t+1)}$, $E^{(t+1)}$, $V^{(t)}$, $E^{(t)}$, $C^{(t)}$.}
\KwOut{$\triangle C_{1}$, $\triangle C_{2}$.}

$\triangle E \gets A \ set \ of \ edges \ changed \ from \ E^{(t)} \ to \ E^{(t+1)}$\;
$\triangle V_{add} \gets V^{(t+1)} \backslash V^{(t)}$; $\triangle V_{del} \gets V^{(t)} \backslash V^{(t+1)}$\;
$\triangle C_{1} \gets \O$; $\triangle C_{2} \gets \O$\;

\For {$e_{ij} \in \triangle E$}{
    \For {$k \in \{i, j\}$}{
        \If {$k \in \triangle V_{del}$} {
            $\triangle C_{1} \gets \triangle C_{1} \cup \{c_{k}\}$\;
            \For {$e_{kl} \in E^{(t)}$}{
                $\triangle C_{1} \gets \triangle C_{1} \cup \{c_{l}\}$\;
            }
        }
        \If {$k \in \triangle V_{add}$} {
            $\triangle C_{1} \gets \triangle C_{1} \cup \{c_{k}\}$\;
            $w_{max}=0$; $c \gets \O$\;
            \For {$e_{kl} \in E^{(t+1)}$}{
                $\triangle C_{1} \gets \triangle C_{1} \cup \{c_{l}\}$\;
                \If {$w_{kl}>w_{max}$}{
                    $w_{max}=w_{kl}$; $c \gets \{k, l\}$\;
                }
            }
            $\triangle C_{2} \gets \triangle C_{2} \cup \{c\}$\;
        }
    }
    \If {$i, j \notin \triangle V_{del} \cup \triangle V_{add}$}{
        \If {$e_{ij} \notin E^{(t+1)}$ \textbf{or} $w^{t}_{ij} > w^{t+1}_{ij}$}{
            \If {$c_{i} = c_{j}$}{
                $\triangle C_{1} \gets \triangle C_{1} \cup \{c_{i}\}$\;
                \For {$k \in \{i, j\}$}{
                    \For {$e_{kl} \in E^{(t)}$}{
                        $\triangle C_{1} \gets \triangle C_{1} \cup \{c_{l}\}$\;
                    }
                }
            }
        }
        \If {$e_{ij} \notin E^{(t)}$ \textbf{or} $w^{t}_{ij} < w^{t+1}_{ij}$}{
            \If {$c_{i} = c_{j}$}{
                 $\triangle C_{1} \gets \triangle C_{1} \cup \{c_{i}\}$; $c \gets \{i, j\}$\;
                 $\triangle C_{2} \gets \triangle C_{2} \cup \{c\}$\;
            }
            \Else{
                 $\triangle w = w^{t+1}_{ij} - w^{t}_{ij}$; $c_{k} = c_{i} \cup c_{j}$\;
                 $\alpha_{2}=\alpha_{c_{i}}+\alpha_{c_{j}}-\alpha_{c_{k}}$; $\beta_{2}=\beta_{c_{i}} + \beta_{c_{j}}$\;
                 $\delta_{1}=2m - \alpha_{2} - \beta_{2}$; $\delta_{2}=m\alpha_{2}+\beta_{c_{i}}\beta_{c_{j}}$\;
                \If {$2 \triangle w + \delta_{1} > \sqrt{\delta_{1}^{2} + 4\delta_{2}}$}{
                     $\triangle C_{1} \gets \triangle C_{1} \cup \{c_{i}, c_{j}\}$; $c \gets \{i, j\}$\;
                     $\triangle C_{2} \gets \triangle C_{2} \cup \{c\}$\;
                }
            }
        }
    }
}

\Return $\triangle C_{1}$, $\triangle C_{2}$.
\end{algorithm}

\begin{algorithm}[t]
\caption{DynaMo}\label{alg:DynaMo}
\LinesNumbered
\KwIn{$G^{(t+1)}$, $G^{(t)}$, $C^{(t)}$.}
\KwOut{$C^{(t+1)}$.}

 $\triangle C_{1}$, $\triangle C_{2}$ $\gets$ \textbf{Init}($V^{(t+1)}$, $E^{(t+1)}$, $V^{(t)}$, $E^{(t)}$, $C^{(t)}$)\;
 $C^{(t+1)} \gets C^{(t)}$\;

\For {$c_{i} \in \triangle C_{1}$}{
     $C^{(t+1)} \gets C^{(t+1)} \backslash \{c_{i}\}$\;
        \For {$k \in c_{i}$}{
             \text{Create singleton community:} $c_{k} \gets \{k\}$\;
             $C^{(t+1)} \gets C^{(t+1)} \cup \{c_{k}\}$\;
        }
}

\For {$c=\{i, j\} \in \triangle C_{2}$}{
    \text{Create two-vertices community:} $c_{k} \gets \{i, j\}$\;
    $C^{(t+1)} \gets (C^{(t+1)} \backslash \{c_{i}, c_{j}\}) \cup \{c_{k}\}$\;
}

$C^{(t+1)}$ $\gets$ \textbf{Louvain}($C^{(t+1)}$, $G^{(t+1)}$)\;

\Return $C^{(t+1)}$.
\end{algorithm}

\subsection{Implementation and Analysis}
\subsubsection{Implementation}
Algorithm~\ref{alg:DynaMoInitialization} presents the DynaMo Initialization, where we implement the operation of each type of incremental network change to initialize the intermediate community structure towards maximizing the modularity. The input contains the current network $G^{(t+1)}$, the previous network $G^{(t)}$ and the previous community structure $C^{(t)}$. The output contains two set of communities, $\triangle C_{1}$ and $\triangle C_{2}$, that will be modified to initialize the intermediate community structure at the beginning of the second phase. $\triangle C_{1}$ contains a set of communities in $C^{(t)}$ to be separated into singleton communities, and $\triangle C_{2}$ contains a set of two-vertices communities to be created.
Algorithm~\ref{alg:DynaMo} presents the second phase, where the last two steps of Louvain algorithm is applied on the initialized intermediate community structure of $G^{(t+1)}$.

Most of the operations in Algorithm~\ref{alg:DynaMoInitialization} are theoretically guaranteed by our propositions described in Section~\ref{sec:DynaMoAlgorithm} to maximize the modularity, while some of the operations are heuristically designed for the sake of the efficiency.
For instance, according to Proposition~\ref{prop:1}, Remark~\ref{remark:1} and Proposition~\ref{prop:2}, given ICEA/WI between vertices $i$ and $j$, we initialize $i$ and $j$ as a two-vertices community to incrementally maximize the modularity, and initialize all the other vertices in $c_{i}$ as singleton communities to take all the influenced vertices into consideration carefully while maintaining the algorithm efficiency (lines 26-28).
According to Proposition~\ref{prop:4} and Proposition~\ref{prop:5}, we use a designed threshold condition (lines 30-33) to determine the operation of given CCEA/WI. If the condition is true, we use the same operation of ICEA/WI to tackle CCEA/WI (lines 33-35). Otherwise, we keep the community structure unchanged to incrementally maximize the modularity.
According to Proposition~\ref{prop:3} and the analysis in Section~\ref{sec:DynaMoAlgorithm_EDWD}, given ICED/WD, we initialize all the potentially influenced vertices as singleton communities to maintain a trade-off between the effectiveness and efficiency (lines 18-24). According to Proposition~\ref{prop:6}, given CCED/WD, we keep the community structure unchanged to maximize the local modularity gain.
According to Proposition~\ref{prop:7} and Proposition~\ref{prop:8}, given new vertex $i$ and its associated edges, we initialize $i$ and its most closely connected neighbor vertex as a two-vertices community (lines 12, 15-17), and initialize all the potentially influenced vertices as singleton communities (lines 10-16). After deleting vertex $i$, we heuristically initialize all the vertices within $c_{i}$ and $i$'s neighbor communities as singleton communities (lines 6-9). To summarize, initializing $\triangle C_{2}$ aims to incrementally maximize the modularity with certain theoretical guarantees, and initializing $\triangle C_{1}$ aims to heuristically maximize the modularity (by Algorithm~\ref{alg:DynaMo}) while maintaining the algorithm efficiency.

\subsubsection{Time Complexity Analysis} \label{sec:DynaMo_TimecomplexityAnalysis}
The computation of our algorithm tackling one network snapshot 
comes from two parts: (a) the initialization, and (b) the last two steps of Louvain algorithm. In the initialization, different network changes trigger different operations, thus resulting in different computation time. For instance, if one network change is ICEA/WI (i.e., $e_{ij}$, $c_{i}=c_{j}$), our algorithm (line 26-28) will add $c_{i}$ into $\triangle C_{1}$, 
and add $c=\{i,j\}$ into $\triangle C_{2}$. 
The time complexity of both operations are $O(1)$, thus, the time complexity to deal with single change of ICEA/WI is $O(1)$. Similarly, the time complexities to deal with single change of CCEA/WI (line 29-35) and CCED/WD (no operation needed) are also $O(1)$. To deal with single change of ICED/WD, VA or VD, our algorithm runs through the set of neighbor vertices of the changed edge, and thus, result in $O(\frac{|E|}{|V|})$ 
time complexity. Furthermore, as shown in Algorithm~\ref{alg:DynaMoInitialization}, each network snapshot usually has multiple network changes. Since the number of network changes is proportional to $\triangle E$, the overall time complexity of the initialization 
is $O(|\triangle E|)$ or $O(|\triangle E| \cdot \frac{|E|}{|V|})$.

The time complexity of the original Louvain algorithm is $O(|E|)$. However, compared with the Louvain algorithm initialization, our algorithm considers the historical information and designs an initialization phase to reduce the number of edges left for the second phase analysis as much as possible. Thus, 
the time complexity of the second phase of our algorithm is $O(|E|^{*})$, where $|E|^{*} \ll |E|$. Hence, the overall best case time complexity of our algorithm 
is $O(|\triangle E| +|E|^{*})$, and the worst case is $O(|\triangle E| \cdot \frac{|E|}{|V|}+|E|^{*})$.

\section{Experimental Evaluation} \label{sec:ExperimentalEvaluation}
\subsection{Experiment Environment}
All the experiments were conducted on a PC with an Intel Xeon Gold 6148 Processor, 128GB RAM, running 64-bit Ubuntu 18.04 LTS operating system. All the algorithms and experiments are implemented using Java with JDK 8.

\subsection{Baseline Approaches}
We compare DynaMo with {\bf Louvain} (Section~\ref{sec:Preliminaries_Louvain}), and 5 dynamic algorithms: (i) {\bf Batch \cite{chong2013incremental}:} a batch-based incremental modularity optimization algorithm; (ii) {\bf GreMod \cite{shang2014real}:} a rule-based incremental algorithm that performs predetermined operations on edge additions; (iii) {\bf QCA \cite{nguyen2011adaptive}:} a rule-based incremental algorithm that updates the community structures according to predefined rules of vertex/edge additions/deletions; (iv) {\bf LBTR \cite{shang2016targeted}:} a learning-based algorithm that uses classifiers to update community assignments. We use Support Vector Machine (SVM) and Logistic Regression (LR) as the classifiers, namely {\bf LBTR-SVM} and {\bf LBTR-LR}.

\begin{table*}[!t]
\footnotesize
\captionsetup{font=footnotesize}
\caption{Description of the real-world dynamic networks [Notations: $|V|$ ($|E|$): $\#$ of unique vertices (edges); $\mathbb{E}[|\triangle V|]$ ($\mathbb{E}[|\triangle E|]$): avg. $\#$ of vertices (edges) changed per network snapshots; $\#$ of snapshots: total number of consecutive network snapshots; time-interval: period of time between two consecutive network snapshots; time-span: total time spanning of each network dataset].}
\label{table:EvolvingNetworks}
\centering
\begin{tabular}{c|c|c|c|c|c|c|c|c|c}
\hline
 \bfseries networks & $\mathbf{|V|}$ & $\mathbf{\mathbb{E}[|\triangle V|]}$ & \bfseries vertex-type & $\mathbf{|E|}$  & $\mathbf{\mathbb{E}[|\triangle E|]}$ & \bfseries edge-type & \bfseries $\#$ of snapshots & \bfseries time-interval & \bfseries time-span\\
\hline
 \bfseries Cit-HepPh  & 30,501 & 6,460  & author  & 346,742  & 11,127 & co-citation & 31 & 4 months &  124 months\\
 \hline
 \bfseries Cit-HepTh  & 7,577 & 1,253  & author  & 51,089  & 2,042 & co-citation & 25 & 5 months &  125 months\\
 \hline
 \bfseries DBLP  & 1,411,321  & 122,731 &  author & 5,928,285 & 191,233 & co-authorship & 31 & 2 years &  62 years\\
 \hline
 \bfseries Facebook  & 59,302 & 12,765 & user  & 592,406  & 20,943 & friendship & 28 & 1 month & 28 months \\
 \hline
 \bfseries Flickr  & 780,079 & 93,253 & user  & 4,407,259 & 168,977 & follow & 24 & 3 days & 72 days \\
 \hline
 \bfseries YouTube  & 3,160,656 & 91,954  & user  &  7,211,498 & 175,303 & subscription & 33 & 5 days & 165 days \\
\hline
\end{tabular}
\end{table*}

\subsection{Experiment Datasets}
We conduct our experiments on two categories of networks: real-world networks (ground-truth is unknown), and synthetic networks (ground-truth is known).
\subsubsection{Real-world Dynamic Networks}
As shown in Table~\ref{table:EvolvingNetworks}, six real-world networks are used in our experiments. (i) {\bf Cit-HepPh (Cit-HepTh) \cite{leskovec2005graphs}} contains the citation network of high-energy physics phenomenology (theory) papers from 1993 to 2003. (ii) {\bf DBLP \cite{bader2013graph}} contains a co-authorship network of computer science papers ranging from 1954 to 2015, where each author is represented as a vertex and co-authors are linked by an edge. (iii) {\bf Facebook \cite{viswanath2009evolution}} contains the user friendship establishment information from about 52\% of Facebook users in New Orleans area, spanning from September 26th, 2006 to January 22nd, 2009. In this network, each vertex represents a Facebook user, and each edge represents an user-to-user friendship establishment link that contains a timestamp representing the time of friendship establishment. (iv) {\bf Flickr \cite{mislove2008growth}} was obtained on January 9th, 2007, and contains over 1.8 million users and 22 million links, and each link has a timestamp that represents the time of the following link establishment. We select a sub-network, where all the user-to-user following links were established from March 6th, 2007 to May 15th, 2007. (v) {\bf YouTube \cite{mislove2007measurement}} was obtained on January 15th, 2007 and consists of over 1.1 million users and 4.9 million links, and each link has a timestamp that represents the time of the subscribing link establishment. We select a sub-network, where all the user-to-user subscribing links were established from February 2nd, 2007 to July 23rd, 2007.

\subsubsection{Synthetic Dynamic Networks}
We use RDyn \cite{rossetti2017graph}, a benchmark model focusing on community changes in dynamic networks, to generate synthetic networks and their ground-truth communities. It allows us to specify different parameters, such as the number of vertices ($N$), the number of time points ($T$), the maximum number of community change events (e.g., splitting or merging) per time point ($M$), etc.. We use various combinations of $N$, $T$ and $M$ to generate synthetic networks, where $N$ $\in$ $\{200,$ $400,$ $600,$ $800,$ $1000\}$, $T$ $\in$ $\{25,$ $50,$ $75,$ $100,$ $125\}$, $M$ $\in$ $\{1,$ $2,$ $3,$ $4\}$ and all the other parameters set by default values. For each parameter combination (out of 100 combinations in total), we randomly generate 100 synthetic networks, resulting in 10,000 synthetic networks in total.

\subsection{Experimental Procedure}
For each real-world network, we apply Louvain algorithm on its initial snapshot to obtain its initial community structure (Section~\ref{sec:DynaMo_MethodologyOverview}). For each synthetic network, we use the ground-truth communities of its initial snapshot as its initial community structure. For the rest of snapshots of real-world and synthetic networks, the dynamic algorithms only use the initial community structure and the network changes between two consecutive snapshots to update the new community structures, while the static algorithm will be applied on the whole network of each snapshot. All experiments are performed for 200 times to obtain the average results.

\begin{figure*}[tb]
\captionsetup{font=footnotesize}
        \centering
        \begin{subfigure}[b]{0.32\textwidth}
                \includegraphics[width=\textwidth]{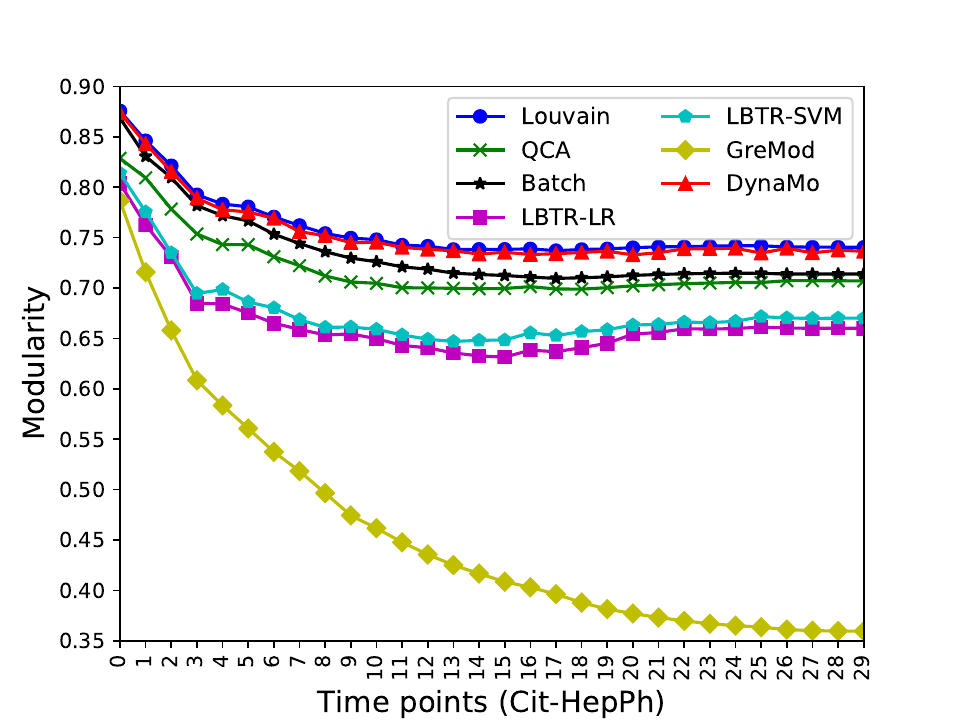}
                \caption{}
                \label{fig:Cit-HepPh_MOD_CP}
        \end{subfigure}%
        ~ 
        \begin{subfigure}[b]{0.32\textwidth}
                \includegraphics[width=\textwidth]{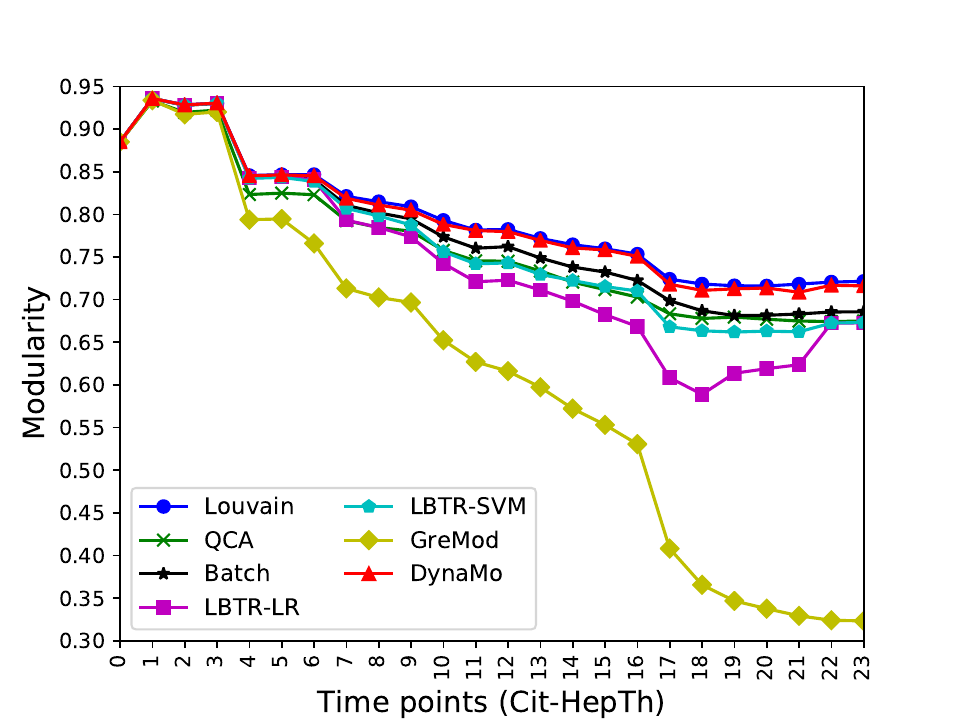}
                \caption{}
                \label{fig:Cit-HepTh_MOD_CP}
        \end{subfigure}
        ~ 
        \begin{subfigure}[b]{0.32\textwidth}
                \includegraphics[width=\textwidth]{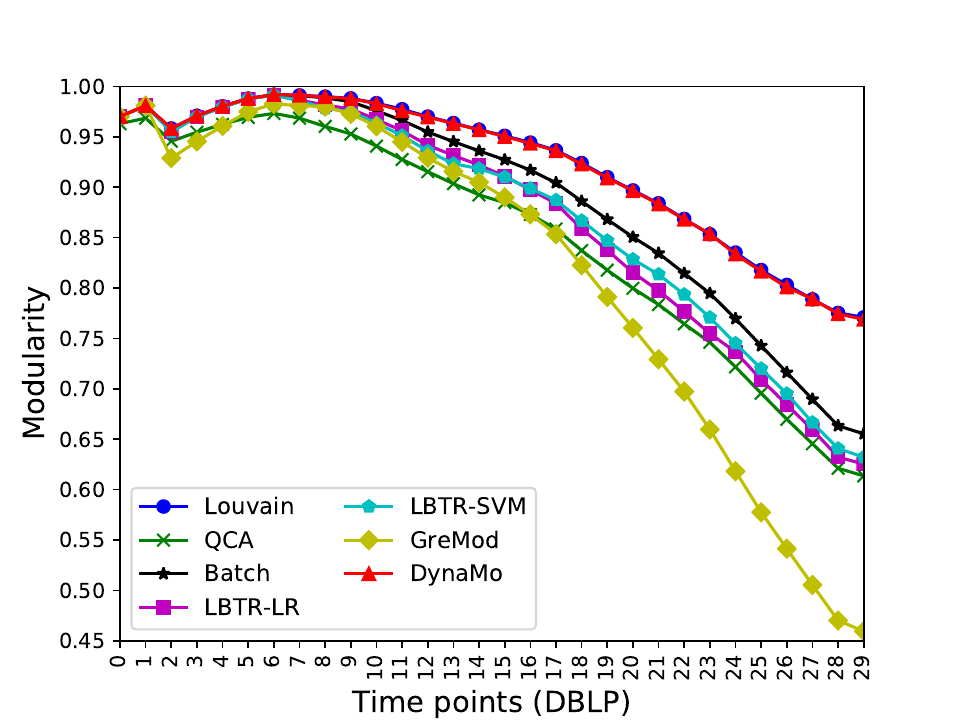}
                \caption{}
                \label{fig:DBLP_MOD_CP}
        \end{subfigure}
        ~ 
        \begin{subfigure}[b]{0.32\textwidth}
                \includegraphics[width=\textwidth]{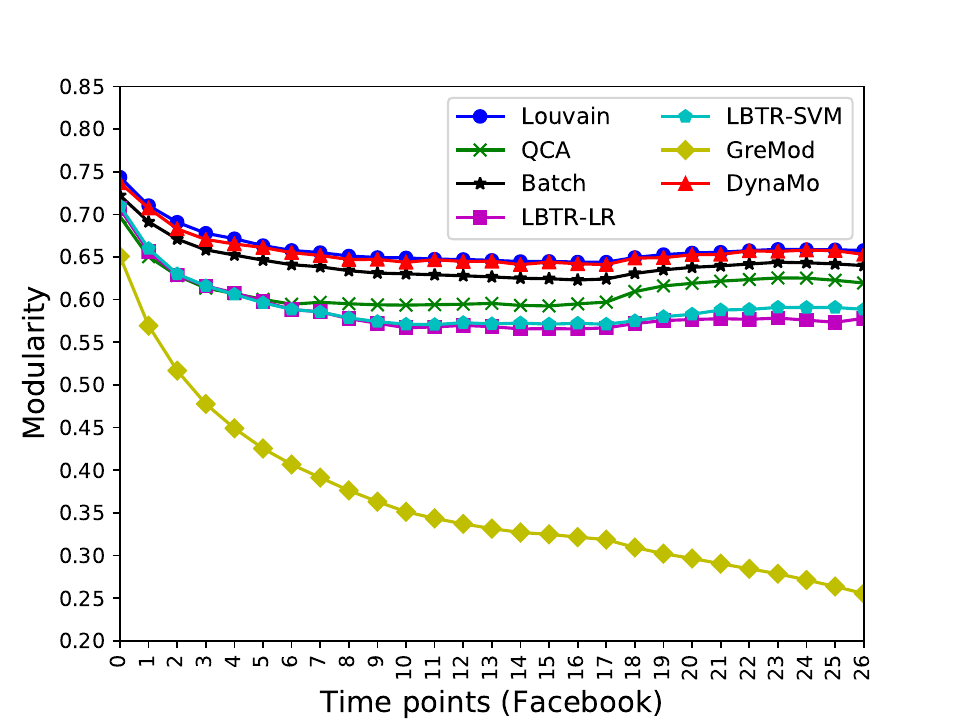}
                \caption{}
                \label{fig:Facebook_MOD_CP}
        \end{subfigure}
        ~ 
        \begin{subfigure}[b]{0.32\textwidth}
                \includegraphics[width=\textwidth]{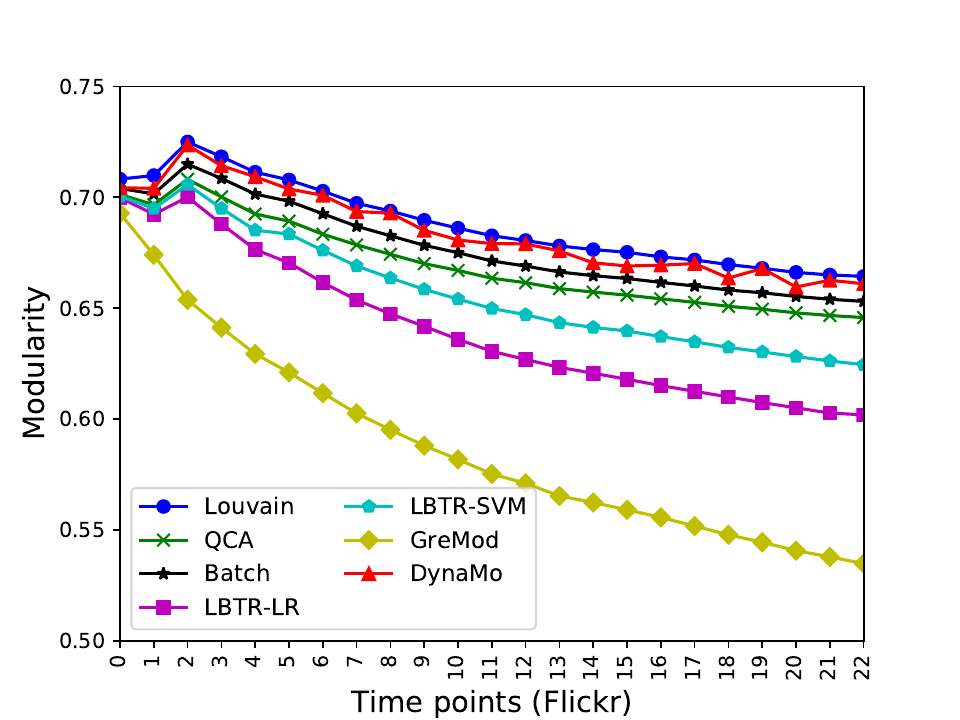}
                \caption{}
                \label{fig:Flickr_MOD_CP}
        \end{subfigure}
        ~ 
        \begin{subfigure}[b]{0.32\textwidth}
                \includegraphics[width=\textwidth]{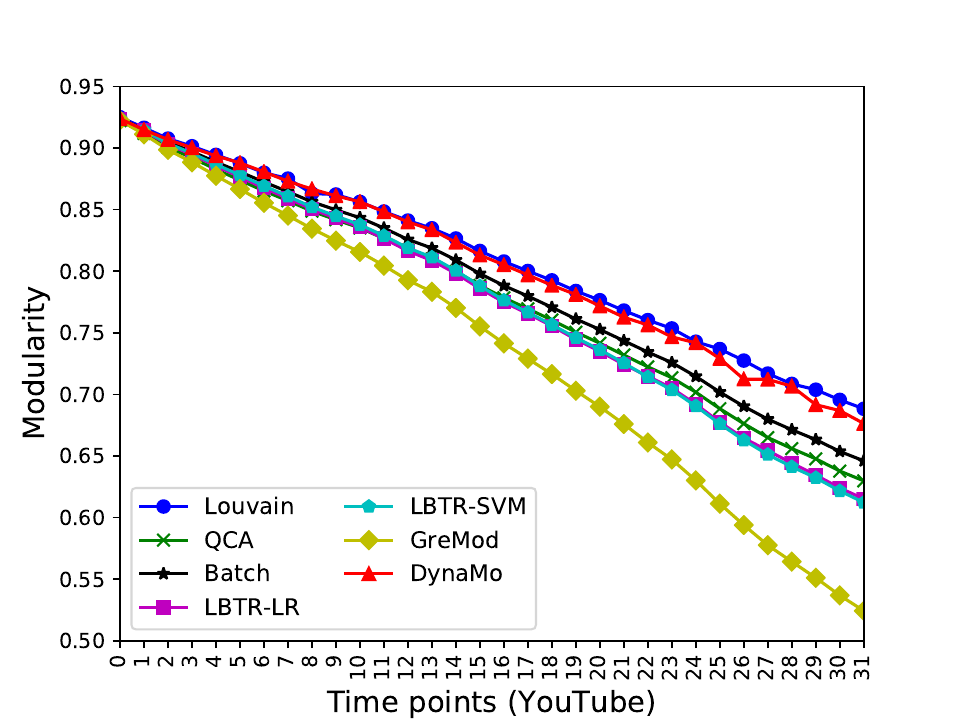}
                \caption{}
                \label{fig:YouTube_MOD_CP}
        \end{subfigure}
        \caption{
        The modularity results of real-world networks. (a) Cit-HepPh. (b) Cit-HepTh. (c) DBLP. (d) Facebook. (e) Flickr. (f) YouTube.}
        \label{fig:Modularity}
\end{figure*}

\begin{figure*}[tb]
\captionsetup{font=footnotesize}
        \centering
        \begin{subfigure}[b]{0.32\textwidth}
                \includegraphics[width=\textwidth]{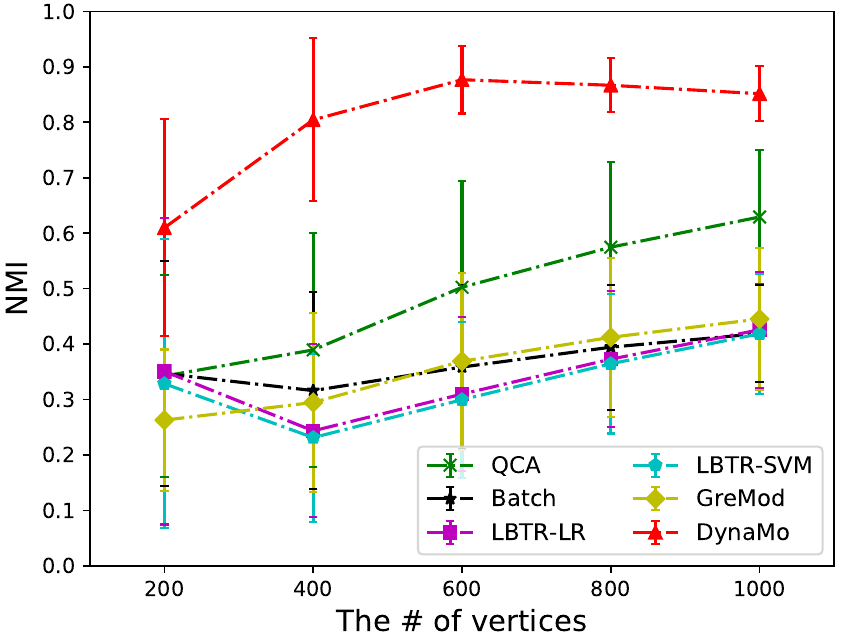}
                \caption{}
                \label{fig:NMI_vertices_1_CP}
        \end{subfigure}%
        ~ 
        \begin{subfigure}[b]{0.32\textwidth}
                \includegraphics[width=\textwidth]{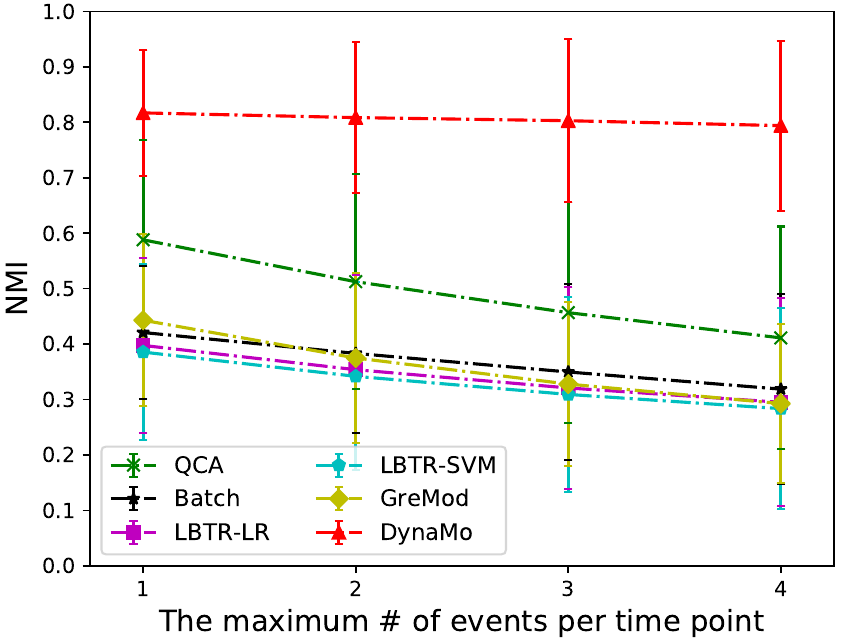}
                \caption{}
                \label{fig:NMI_events_1_CP}
        \end{subfigure}
        ~ 
        \begin{subfigure}[b]{0.32\textwidth}
                \includegraphics[width=\textwidth]{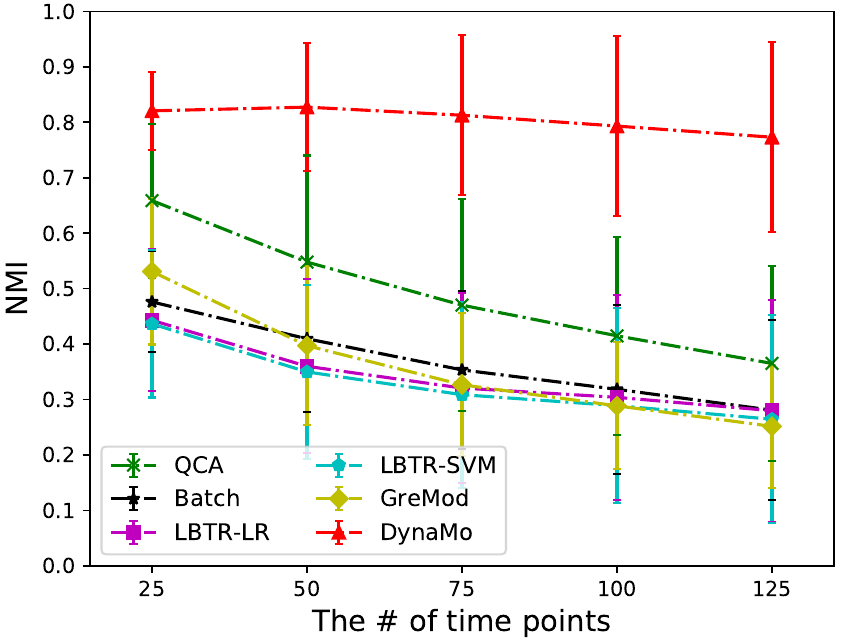}
                \caption{}
                \label{fig:NMI_time_1_CP}
        \end{subfigure}
        \caption{
        The NMI results of synthetic networks. (a) The $\#$ of vertices. (b) The maximum $\#$ of events per time point. (c) The $\#$ of time points.}
        \label{fig:NMI}
\end{figure*}

\begin{figure*}[tb]
\captionsetup{font=footnotesize}
        \centering
        \begin{subfigure}[b]{0.32\textwidth}
                \includegraphics[width=\textwidth]{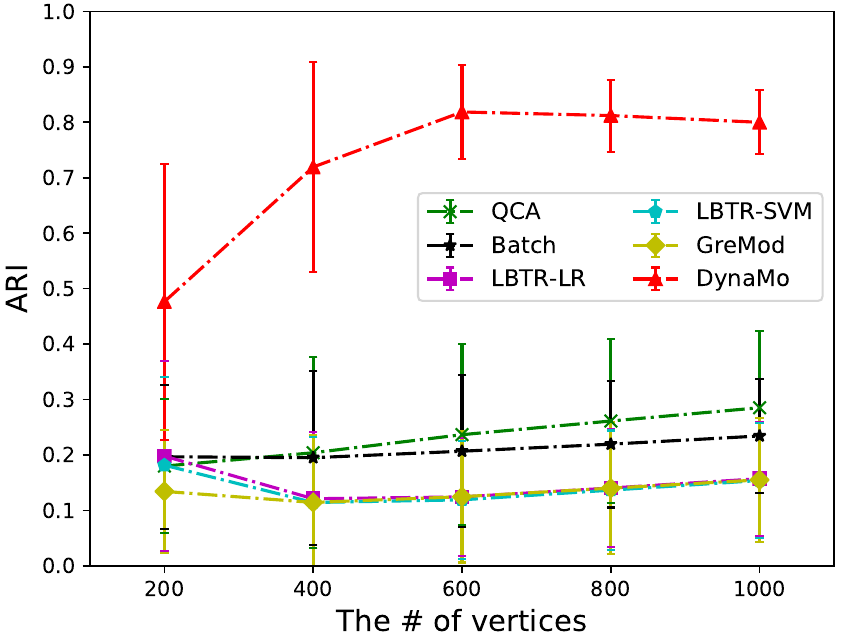}
                \caption{}
                \label{fig:ARI_vertices_1_CP}
        \end{subfigure}%
        ~ 
        \begin{subfigure}[b]{0.32\textwidth}
                \includegraphics[width=\textwidth]{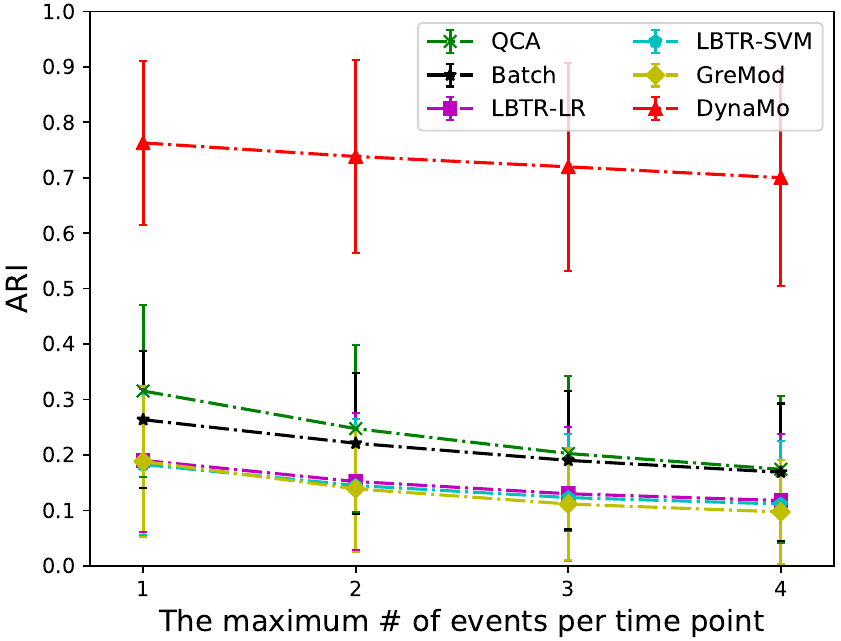}
                \caption{}
                \label{fig:ARI_events_1_CP}
        \end{subfigure}
        ~ 
        \begin{subfigure}[b]{0.32\textwidth}
                \includegraphics[width=\textwidth]{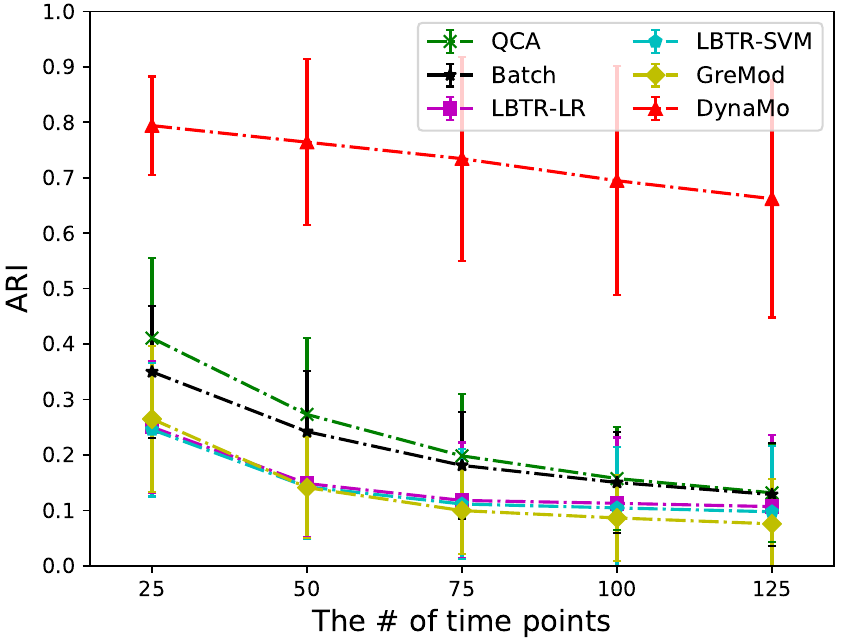}
                \caption{}
                \label{fig:ARI_time_1_CP}
        \end{subfigure}
        \caption{
        The ARI results of synthetic networks. (a) The $\#$ of vertices. (b) The maximum $\#$ of events per time point. (c) The $\#$ of time points.}
        \label{fig:ARI}
\end{figure*}

\begin{figure*}[tb]
\captionsetup{font=footnotesize}
        \centering
        \begin{subfigure}[b]{0.32\textwidth}
                \includegraphics[width=\textwidth]{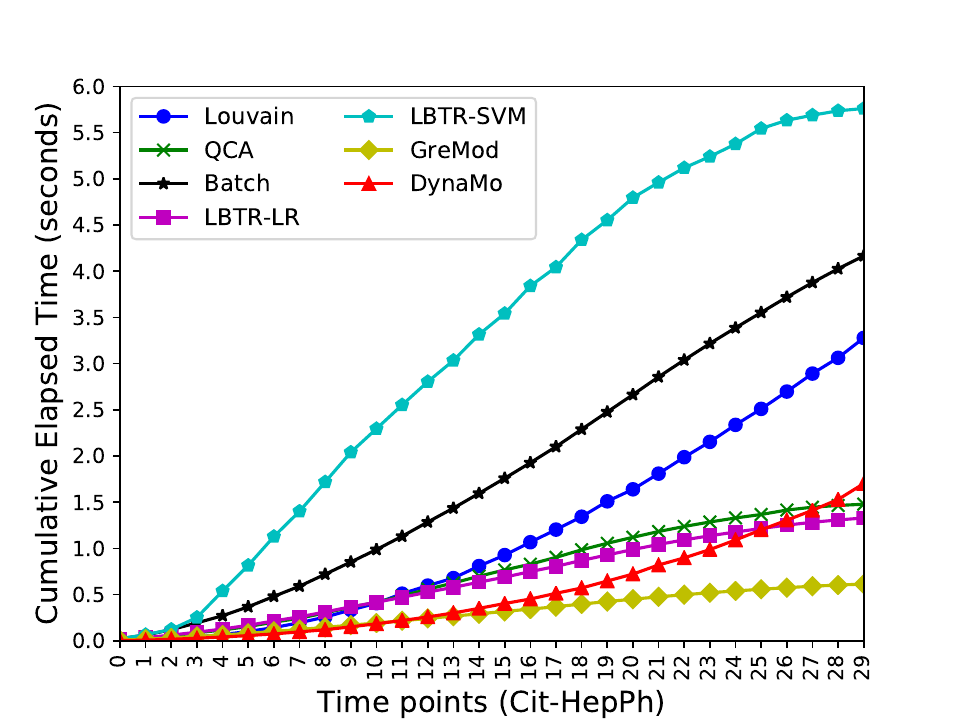}
                \caption{}
                \label{fig:Cit-HepPh_CET_CP}
        \end{subfigure}%
        ~ 
        \begin{subfigure}[b]{0.32\textwidth}
                \includegraphics[width=\textwidth]{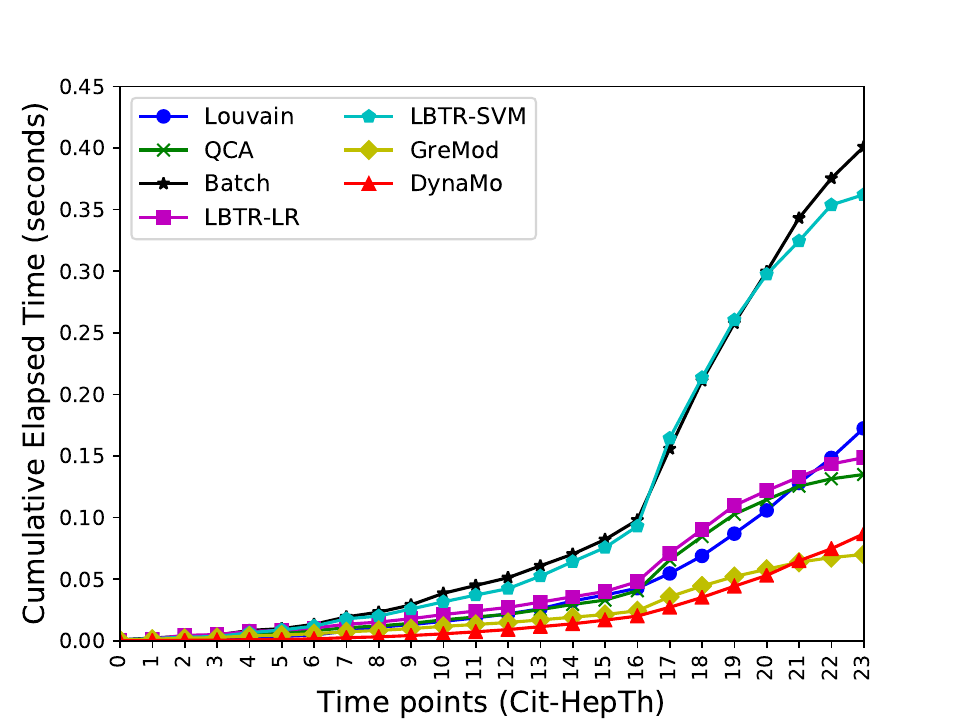}
                \caption{}
                \label{fig:Cit-HepTh_CET_CP}
        \end{subfigure}
        ~ 
        \begin{subfigure}[b]{0.32\textwidth}
                \includegraphics[width=\textwidth]{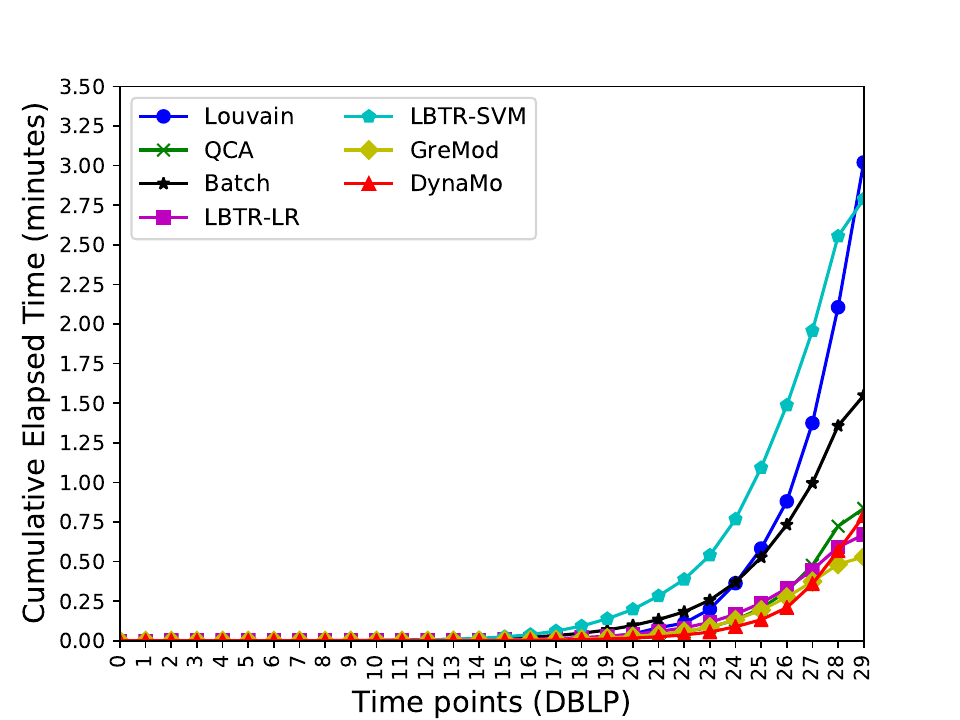}
                \caption{}
                \label{fig:dblp_coauthorship_CET_CP}
        \end{subfigure}
        ~ 
        \begin{subfigure}[b]{0.32\textwidth}
                \includegraphics[width=\textwidth]{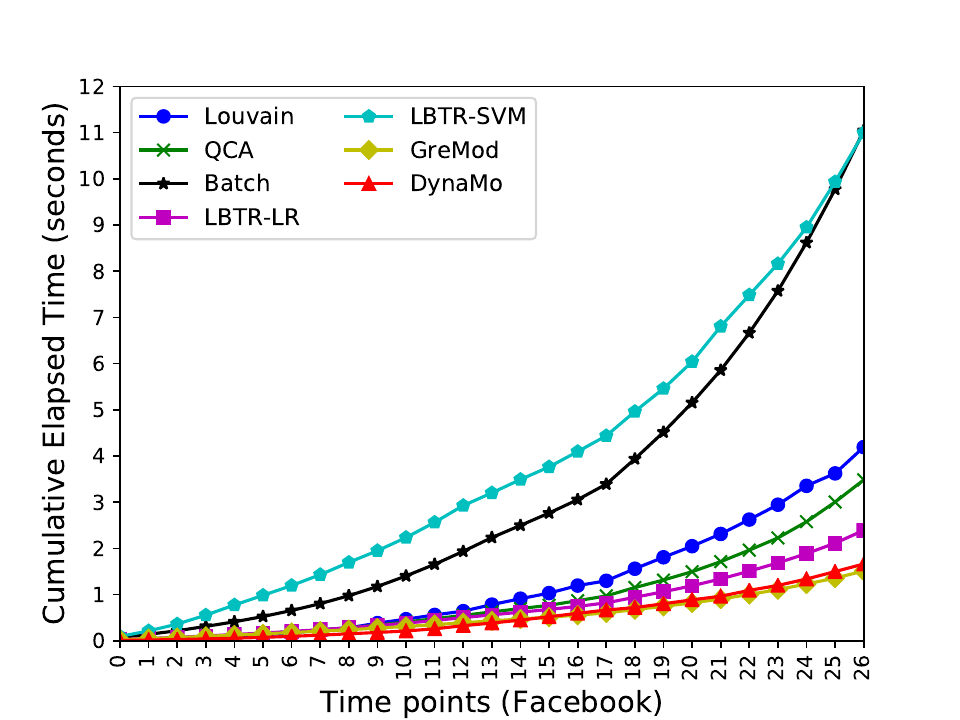}
                \caption{}
                \label{fig:facebook_CET_CP}
        \end{subfigure}
        ~ 
        \begin{subfigure}[b]{0.32\textwidth}
                \includegraphics[width=\textwidth]{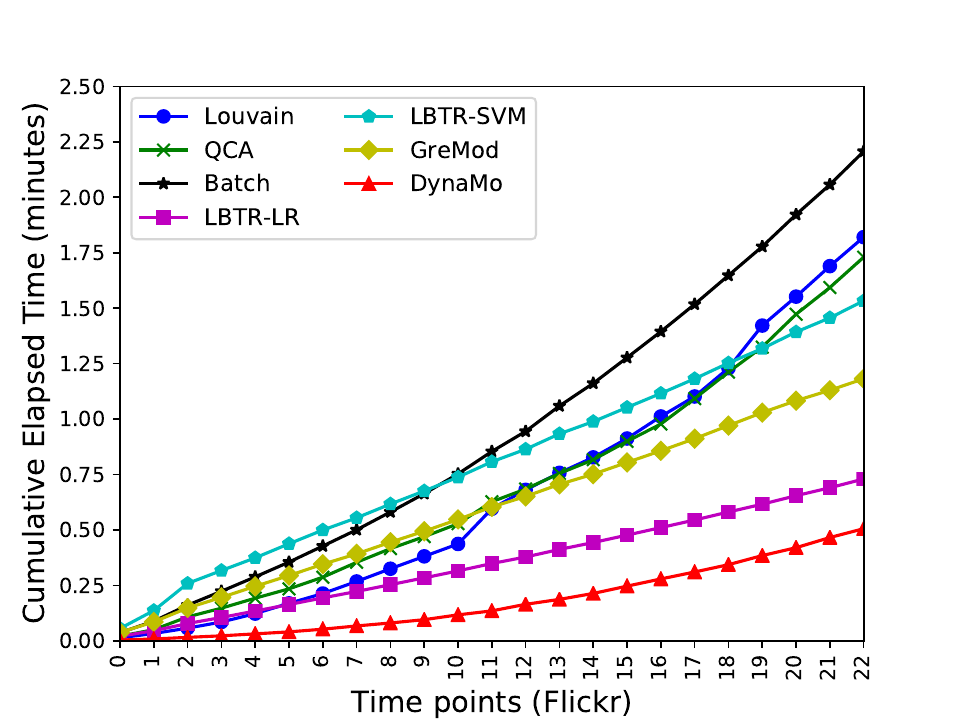}
                \caption{}
                \label{fig:flickr_CET_CP}
        \end{subfigure}
        ~ 
        \begin{subfigure}[b]{0.32\textwidth}
                \includegraphics[width=\textwidth]{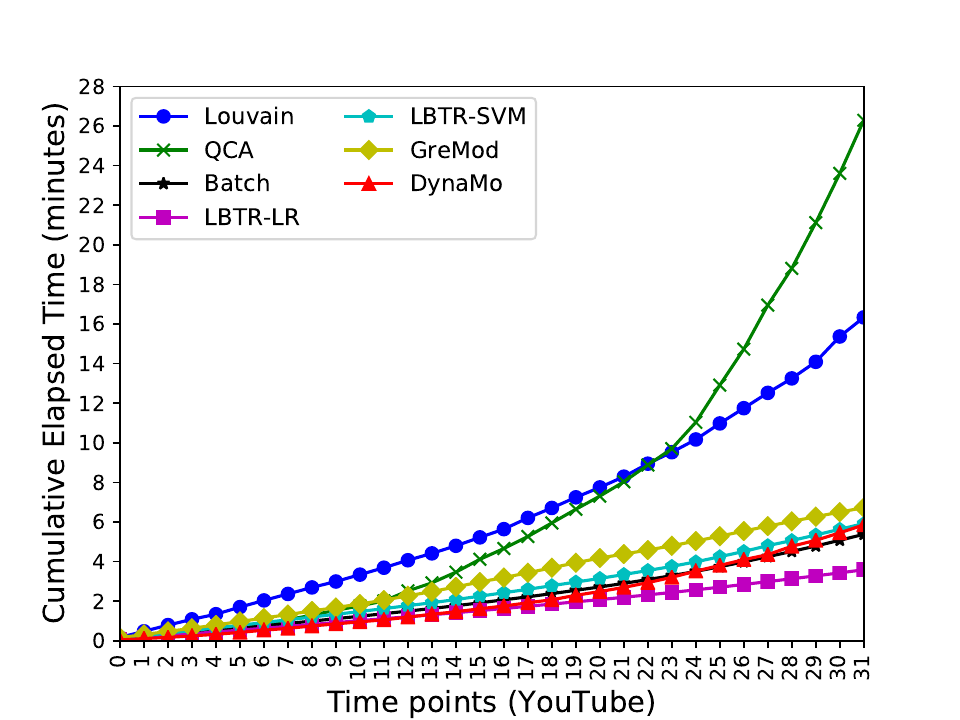}
                \caption{}
                \label{fig:youtube_CET_CP}
        \end{subfigure}
        \caption{
        The cumulative elapsed time results of real world networks. (a) Cit-HepPh. (b) Cit-HepTh. (c) DBLP. (d) Facebook. (e) Flickr. (f) YouTube.}
        \label{fig:CET}
\end{figure*}

\subsection{Effectiveness Analysis} \label{sec:ExperimentalEvaluation_Effectiveness}
\subsubsection{Effectiveness Metrics}
We evaluate the effectiveness of the community detection algorithms using three metrics: modularity, Normalized Mutual Information (NMI) and Adjusted Rand Index (ARI). Modularity (Section~\ref{sec:Preliminaries_Modularity}) is designed to measure the strength of dividing a network into communities, and does not require the ground-truth information. Hence, we use modularity to evaluate the results of the real-world networks. NMI and ARI are designed to measure the similarities between the community structure obtained from the experiments and that of the ground-truth, which are used to evaluate the results of the synthetic networks.

Let $C_{t}$ denote the ground-truth community division, and $C_{r}$ denote the experiment result. NMI is defined as follows:
\begin{equation}
    \begin{split} \label{eq:nmi}
        NMI(C_{t}, C_{r})=\frac{2 \times I(C_{t}; C_{r})}{[H(C_{t})+H(C_{r})]}
    \end{split}
\end{equation}
where $H(C_{r})$ is the entropy of $C_{r}$, and $I(C_{t}; C_{r})$ is the mutual information between $C_{t}$ and $C_{r}$. NMI ranges from 0 to 1. NMI closing to 1 indicates $C_{r}$ is similar to $C_{t}$, while closing to 0 means $C_{r}$ is random compared with $C_{t}$.

Let $a$ be the number of pairs of vertices in the same community in both $C_{t}$ and $C_{r}$, $b$ be the number of pairs of vertices in the same community in $C_{t}$ and in different communities in $C_{r}$, $c$ be the number of pairs of vertices in different communities in $C_{t}$ and in the same community in $C_{r}$, $d$ be the number of pairs of vertices in different communities in both $C_{t}$ and $C_{r}$. ARI is defined as follows:
\begin{equation}
    \begin{split} \label{eq:ari}
        ARI(C_{t}, C_{r})=\frac{2(ad-bc)}{b^{2}+c^{2}+2ad+(a+d)(b+c)}
    \end{split}
\end{equation}
where its upper bound is 1, and the higher, the better.

\subsubsection{Experimental Results}
Fig.~\ref{fig:Modularity} shows the modularity results of 7 algorithms running on 6 real-world networks, respectively. We observe that DynaMo consistently outperforms all the other dynamic algorithms in terms of modularity. Compared with the runner-up algorithm (Batch), DynaMo obtains 2.6\%, 2.2\%, 4.3\%, 2.1\%, 1.1\% and 2.2\% higher modularity averaged over all the time points, and 3.2\%, 4.4\%, 17.3\%, 2.4\%, 1.2\% and 4.7\% higher modularity on the last time point of Cit-HepPh, Cit-HepTh, DBLP, Facebook, Flickr and YouTube, respectively. Compared with Louvain, DynaMo achieves nearly identical performance, with only 0.49\%, 0.38\%, 0.06\%, 0.7\%, 0.5\% and 0.5\% lower modularity averaged over all the time points, and only 0.52\%, 0.74\%, 0.27\%, 0.46\%, 0.5\% and 1.7\% lower modularity on the last time point of Cit-HepPh, Cit-HepTh, DBLP, Facebook, Flickr and YouTube, respectively.

Fig.~\ref{fig:NMI} shows the NMI results (mean and standard deviation) of 6 dynamic algorithms running on 10,000 synthetic networks. We observe that DynaMo obtains the highest NMI value among all the dynamic algorithms regardless of any RDyn parameters. DynaMo outperforms the runner-up algorithm (QCA) by 69.1\%, 66.4\% and 70.3\% on average with the increase of the number of vertices, the maximum number of events per time point, and the number of time points, respectively, which is statistically significant according to the two-sample t-test with 95\% confidence interval. The NMI standard deviation of DynaMo is also lower than that of QCA, demonstrating the consistency of DynaMo in detecting communities of various dynamic networks. Furthermore, as the maximum number of events per time point and the number of time points increase, DynaMo has the minimum NMI value loss among all the dynamic algorithms, indicating DynaMo is more robust and consistent while detecting communities of dynamic networks that last longer and have more events per time point.

Fig.~\ref{fig:ARI} shows the ARI results (mean and standard deviation) of 6 dynamic algorithms running on 10,000 synthetic networks, which share similar patterns as the NMI results. DynaMo outperforms the runner-up algorithm (QCA) by 211.1\%, 224.5\% and 257.6\% on average with the increase of the number of vertices, the maximum number of events per time point, and the number of time points, respectively, which is statistically significant according to the two-sample t-test with 99\% confidence interval. As the number of vertices increases, the ARI standard deviation of DynaMo dramatically decreases, while as the maximum number of events per time point and the number of time points increase, the standard deviation of DynaMo slightly increases. However, even considering the standard deviation difference, DynaMo still significantly outperforms all the other dynamic algorithms.

\begin{table}[!t]
\footnotesize
\captionsetup{font=footnotesize}
\caption{A comparison of the time complexities of the competing algorithms [Notations: $n=|V|$ ($m=|E|$): $\#$ of unique vertices (edges); $\upsilon=|\triangle V|$ ($\epsilon=|\triangle E|$): $\#$ of vertices (edges) changed; $m^{*}_{b}$ ($m^{*}_{d}$): $\#$ of unique vertices (edges) after the initialization phase of Batch (DynaMo), and $m^{*}_{b} \ll m$ ($m^{*}_{d} \ll m$); $T_{LR}$ ($T_{SVM}$): the time complexity of using logistic regression (support vector machine) in LBTR].}
\label{table:TimeComplexity}
\centering
\begin{tabular}{c|c|c}
\hline
 \bfseries algorithms & \bfseries the best case & \bfseries the worst case \\
\hline
 \bfseries Louvain \cite{blondel2008fast} & $O(m)$ & $O(m)$\\
\hline
 \bfseries Batch \cite{chong2013incremental} & $O((\upsilon+\epsilon)\cdot\frac{m}{n}+m^{*}_{b})$ & $O((\upsilon+\epsilon)\cdot\frac{m}{n}+m^{*}_{b})$\\
 \hline
 \bfseries DynaMo  & $O(\epsilon+m^{*}_{d})$ & $O(\epsilon\cdot\frac{m}{n}+m^{*}_{d})$\\
 \hline
 \bfseries QCA \cite{nguyen2011adaptive} & $O(\epsilon)$ & $O(\epsilon \cdot m)$\\
 \hline
 \bfseries GreMod \cite{shang2014real} & $O(\epsilon)$ & $O(\epsilon \cdot n)$\\
 \hline
 \bfseries LBTR-LR \cite{shang2016targeted} & $O(\upsilon \cdot T_{LR})$ & $O(\upsilon \cdot T_{LR})$\\
 \hline
 \bfseries LBTR-SVM \cite{shang2016targeted} & $O(\upsilon \cdot T_{SVM})$ & $O(\upsilon \cdot T_{SVM})$\\
\hline
\end{tabular}
\end{table}

\subsection{Efficiency Analysis} \label{sec:ExperimentalEvaluation_Efficiency}
\subsubsection{Time Complexity Analysis}
Table~\ref{table:TimeComplexity} shows the theoretical time complexities of all the competing algorithms. DynaMo, QCA and GreMod have different time complexities while running in different scenarios (i.e., best/worst case). As discussed in Section~\ref{sec:DynaMo_TimecomplexityAnalysis}, DynaMo has the best case time complexity when the network changes are ICEA/WI, CCEA/WI or CCED/WD, and otherwise, has the worst case time complexity. Similarly, QCA and GreMod have the best case time complexity if the network changes are ICEA or CCED, and otherwise, have the worst case time complexity. For the other algorithms, the time complexities of the best and the worst cases are identical. Below show the details about our analysis.

$\bullet$ Compared with Louvain \cite{blondel2008fast}, DynaMo has less time complexity, when the impact of the network changes of a given network snapshot on its community structure updating is small enough to ensure $m^{*}_{d} \ll m$. First, the evolutionary nature of the real-world dynamic networks assumes two consecutive network snapshots of the same network should have similar community structures. Therefore, each snapshot of a dynamic network should only result in a small part of its community structure being updated (i.e., $m^{*}_{d} \ll m$). Also, from our empirical studies, the assumption of $m^{*}_{d} \ll m$ always holds. Hence, DynaMo should be more efficient than Louvain for most of the time.

$\bullet$ Compared with Batch \cite{chong2013incremental}, DynaMo has less initialization time complexity (i.e., $O(\epsilon\cdot\frac{m}{n}) < O((\upsilon+\epsilon)\cdot\frac{m}{n})$), and different second phase time complexities (i.e., $m^{*}_{d}$ vs. $m^{*}_{b}$ ).

$\bullet$ Compared with QCA \cite{nguyen2011adaptive} and GreMod \cite{shang2014real}, who update the community structure according to certain predefined rule of each network change and one at a time (i.e., not in a batch fashion), DynaMo is more efficient if each network snapshot has more network changes, since DynaMo is capable of handling a batch of network changes.

$\bullet$ Compared with LBTR \cite{shang2016targeted}, who uses machine learning models to decide if a vertex needs to revise its community, DynaMo is more consistent and practical when dealing with different real-world networks. Since the characteristics of an dynamic network keep changing over time, LBTR has to keep updating the machine learning models to adapt the new characteristics. In such case, we have to take the training time into account. Also, the time complexity of LBTR highly depends on the machine learning algorithm used for the classification problem (e.g., $O(T_{SVM})>O(T_{LR})$).

\subsubsection{Empirical Result Studies}
Since the theoretical time complexities always depend on the ideal scenarios or extreme cases, it is necessary to conduct empirical studies using real-world networks. To ensure the comparison is as unbiased as possible, all the algorithms are implemented using Java and running on the same environment. Fig.~\ref{fig:CET} shows the cumulative elapsed time results, and below show the details about our observations.

$\bullet$ Compared with Louvain \cite{blondel2008fast}, DynaMo obtains over 2x, 2x, 4x, 3x, 4x and 3x speed up on the series of network snapshots of Cit-HepPh, Cit-HepTh, DBLP, Facebook, Flickr and YouTube, respectively.

$\bullet$ Compared with Batch \cite{chong2013incremental}, DynaMo obtains over 3x, 5x, 2x, 7x and 5x speed up on the series of network snapshots of Cit-HepPh, Cit-HepTh, DBLP, Facebook and Flickr, respectively. DynaMo spends nearly the same amount of time as Batch on YouTube network.

$\bullet$ Compared with QCA \cite{nguyen2011adaptive}, DynaMo obtains over 2x, 2x, 4x and 5x speed up on the series of network snapshots of Cit-HepTh, Facebook, Flickr and YouTube, respectively. DynaMo is as efficient as QCA on DBLP network, and spends slightly more time on Cit-HepPh network than QCA.

$\bullet$ Compared with GreMod \cite{shang2014real}, DynaMo spends more time on most of the networks, and only performs better on the Flickr and YouTube network.

$\bullet$ Compared with LBTR \cite{shang2016targeted}, DynaMo is much more efficient than LBTR-SVM, and spends slightly more time than LBTR-LR on certain networks.

\subsection{Summary of the Experimental Evaluation}
DynaMo consistently outperforms all the other 5 dynamic algorithms on 6 real-world networks and 10,000 synthetic networks in terms of the effectiveness (i.e., modularity, NMI and ARI) of detecting communities. DynaMo has almost identical performance as Louvain in terms of the effectiveness, with only 0.27\% to 1.7\% lower modularity on certain networks. DynaMo also performs comparably well in terms of the efficiency. For instance, in terms of the cumulative elapsed time results, DynaMo outperforms Louvain, Batch and LBTR-SVM, and obtains similar performance as QCA and LBTR-LR. Even though GreMod acts more efficient than DynaMo, DynaMo is much more effective than GreMod (e.g., GreMod has the worst effectiveness performance running on nearly all datasets). In conclusion, DynaMo significantly outperformed the state-of-the-art dynamic algorithms in terms of effectiveness, and demonstrated much more efficient than the state-of-the-art static algorithm, Louvain algorithm, in detecting communities of dynamic networks, while also maintaining similar efficiency as the best set of competing dynamic algorithms.

\section{Conclusion} \label{sec:Conclusion}
In this paper, we proposed DynaMo, a novel modularity-based dynamic community detection algorithm, aiming to detect communities in dynamic networks. We also present the theoretical guarantees to show why/how our operations could maximize the modularity, while avoiding redundant and repetitive computations. In the experimental evaluation, a comprehensive comparison has been made among our algorithm, Louvain algorithm and 5 other dynamic algorithms. Extensive experiments have been conducted on 6 real world networks and 10,000 synthetic networks. Our results show that DynaMo outperforms all the other 5 dynamic algorithms in terms of the effectiveness, and is 2 to 5 times (by average) faster than Louvain algorithm.


%

%
%
%
%
%

\ifCLASSOPTIONcaptionsoff
  \newpage
\fi



%
%
%

\appendices
\section{}
\label{Appendix_A}

\subsection{Proof of Proposition~\ref{prop:1}}
  \label{Appendix_A_1}
\begin{proof}
Let $Q^{(t+1)}_{1}$ denote the new modularity value if the community structure keeps unchanged (i.e., $c_{i}=c_{j}$), and $Q^{(t+1)}_{2}$ denote the new modularity value if $i$ and $j$ are split into different communities (i.e., $c^{\prime}_{i} \subseteq c_{i}$ and $c^{\prime}_{j} = c_{i} \backslash c^{\prime}_{i}$).
\begin{equation}
    \begin{split} \label{eq:1.1}
        Q^{(t+1)}_{1}=&\frac{1}{2m+2\triangle w}\Big(\alpha_{c_{i}}+2\triangle w-\frac{(\beta_{c_{i}}+2\triangle w)^{2}}{2m+2\triangle w}\\
        &+{\sum_{c \in C}^{c \neq c_{i}}}\big(\alpha_{c}-\frac{\beta_{c}^{2}}{2m+2\triangle w}\big)\Big)
    \end{split}
\end{equation}
\begin{equation}
    \begin{split} \label{eq:1.2}
        Q^{(t+1)}_{2}=&\frac{1}{2m+2\triangle w}\Big(\alpha_{c^{\prime}_{i}}-\frac{(\beta_{c^{\prime}_{i}}+\triangle w)^{2}}{2m+2\triangle w}+
        \alpha_{c^{\prime}_{j}}\\ &-\frac{(\beta_{c^{\prime}_{j}}+\triangle w)^{2}}{2m+2\triangle w}
        +{\sum_{c \in C}^{c \neq c_{i}}}\big(\alpha_{c}-\frac{\beta_{c}^{2}}{2m+2\triangle w}\big)\Big)
    \end{split}
\end{equation}
where $\beta_{c_{i}}=\beta_{c^{\prime}_{i}}+\beta_{c^{\prime}_{j}}$.

Let $Q^{(t)}_{1}$ denote the modularity value of the ``optimal'' community structure of network snapshot $G^{(t)}$, and $Q^{(t)}_{2}$ denote its modularity value while $i$ and $j$ were split into different communities as in the calculation of $Q^{(t+1)}_{2}$.
\begin{equation}
    \begin{split} \label{eq:1.3}
        Q^{(t)}_{1}=&\frac{1}{2m}\Big(\alpha_{c_{i}}-\frac{\beta_{c_{i}}^{2}}{2m}+{\sum_{c \in C}^{c \neq c_{i}}}\big(\alpha_{c}-\frac{\beta_{c}^{2}}{2m}\big)\Big)
    \end{split}
\end{equation}
\begin{equation}
    \begin{split} \label{eq:1.4}
        Q^{(t)}_{2}=&\frac{1}{2m}\Big(\alpha_{c^{\prime}_{i}}-\frac{\beta_{c^{\prime}_{i}}^{2}}{2m}+
        \alpha_{c^{\prime}_{j}}-\frac{\beta_{c^{\prime}_{j}}^{2}}{2m} \\
        &+{\sum_{c \in C}^{c \neq c_{i}}}\big(\alpha_{c}-\frac{\beta_{c}^{2}}{2m}\big)\Big)
    \end{split}
\end{equation}

Since $c_{i}=c_{j}$ is the ``optimal'' community structure of $G^{(t)}$, we have:
\begin{equation}
    \begin{split} \label{eq:1.5}
        &Q^{(t)}_{1}-Q^{(t)}_{2} \geq 0 \\
        &\Leftrightarrow \frac{1}{2m}\Big(
        \alpha_{c_{i}} - \alpha_{c^{\prime}_{i}} - \alpha_{c^{\prime}_{j}}
        - \frac{\beta_{c^{\prime}_{i}} \beta_{c^{\prime}_{j}}}{m}
        \Big) \geq 0 \\
        &\Leftrightarrow \alpha_{c_{i}} - \alpha_{c^{\prime}_{i}} - \alpha_{c^{\prime}_{j}}
        - \frac{\beta_{c^{\prime}_{i}} \beta_{c^{\prime}_{j}}}{m} \geq 0 \\
        &\Rightarrow \alpha_{c_{i}} - \alpha_{c^{\prime}_{i}} - \alpha_{c^{\prime}_{j}}
        - \frac{\beta_{c^{\prime}_{i}} \beta_{c^{\prime}_{j}}}{m+\triangle w} \geq 0
    \end{split}
\end{equation}

By comparing $Q^{(t+1)}_{1}$ and $Q^{(t+1)}_{2}$, we get the difference of the modularity gain between the ``unchanged'' and ``split'' operations as follows:
\begin{equation}
    \begin{split} \label{eq:1.7}
    &\Big(Q^{(t+1)}_{1}-Q^{(t)}_{1}\Big) - \Big(Q^{(t+1)}_{2}-Q^{(t)}_{1}\Big) = Q^{(t+1)}_{1} - Q^{(t+1)}_{2} \\
    &=\frac{1}{2m+2\triangle w}\Big(
        \alpha_{c_{i}} - \alpha_{c^{\prime}_{i}} - \alpha_{c^{\prime}_{j}}
        - \frac{\beta_{c^{\prime}_{i}} \beta_{c^{\prime}_{j}}}{m+\triangle w}
        \Big) \\
        &+
        \frac{1}{2m+2\triangle w}\Big(
        \frac{\triangle w(2m-\beta_{c^{\prime}_{i}}-\beta_{c^{\prime}_{j}}) + (\triangle w)^{2}}{m+\triangle w}
        \Big)
    \end{split}
\end{equation}

Since $\beta_{c^{\prime}_{i}}+\beta_{c^{\prime}_{j}}=\beta_{c_{i}} \leq 2m$, $\triangle w>0$ and equation (\ref{eq:1.5}), we have $Q^{(t+1)}_{1} - Q^{(t+1)}_{2} > 0$ and the conclusion follows.
\end{proof}

\subsection{Proof of Proposition~\ref{prop:2}}
  \label{Appendix_A_2}
\begin{proof}
By Proposition \ref{prop:1}, $i$ and $j$ should belong to the same community after ICEA/WI happened between $i$ and $j$, where $c_{i}=c_{j}$. Without loss of generality, we assume $i$ and $j$ belong to community $c_{p}$, even after certain bi-split operation. Therefore,
\begin{equation}
    \begin{split} \label{eq:2.1}
    &\triangle w > \frac{m\alpha_{1}-\beta_{c_{p}}\beta_{c_{q}}}{2\beta_{c_{q}}-\alpha_{1}} \Leftrightarrow \\
    & \Bigg( \frac{1}{2m+2\triangle w}\Big(\alpha_{c_{i}}+2\triangle w-\frac{(\beta_{c_{i}}+2\triangle w)^{2}}{2m+2\triangle w}\\
        &+{\sum_{c \in C}^{c \neq c_{i}}}\big(\alpha_{c}-\frac{\beta_{c}^{2}}{2m+2\triangle w}\big)\Big)\Bigg)\\
        &- \Bigg(\frac{1}{2m+2\triangle w}\Big(\alpha_{c_{p}}+ 2\triangle w-\frac{(\beta_{c_{p}}+2\triangle w)^{2}}{2m+2\triangle w}+
        \alpha_{c_{q}}\\ &-\frac{\beta^{2}_{c_{q}}}{2m+2\triangle w}
        +{\sum_{c \in C}^{c \neq c_{i}}}\big(\alpha_{c}-\frac{\beta_{c}^{2}}{2m+2\triangle w}\big)\Big)\Bigg) < 0
    \end{split}
\end{equation}
where $\beta_{c_{i}}=\beta_{c_{p}}+\beta_{c_{q}}$, and $\{c_{p}$, $c_{q}\}$ is a bi-split of $c_{i}$.
\end{proof}

\subsection{Proof of Proposition~\ref{prop:4}}
  \label{Appendix_A_3}

\begin{proof}
Let $Q_{1}^{(t+1)}$ denote the modularity value if the community structure keeps unchanged, and $Q_{2}^{(t+1)}$ denote the modularity value if $c_{i}$ and $c_{j}$ are merged into $c_{k}$ (i.e., $c_{k} = c_{i} \cup c_{j}$). Then, we have:
\begin{equation}
    \begin{split} \label{eq:4.1}
        Q^{(t+1)}_{1}&=\frac{1}{2m+2\triangle w}\Big(\alpha_{c_{i}}-\frac{(\beta_{c_{i}}+\triangle w)^{2}}{2m+2\triangle w}+
        \alpha_{c_{j}} \\
        &-\frac{(\beta_{c_{j}}+\triangle w)^{2}}{2m+2\triangle w}
        +{\sum_{c \in C}^{c \neq c_{i}, c_{j}}}\big(\alpha_{c}-\frac{\beta_{c}^{2}}{2m+2\triangle w}\big)\Big)
    \end{split}
\end{equation}
\begin{equation}
    \begin{split} \label{eq:4.2}
        Q^{(t+1)}_{2}&=\frac{1}{2m+2\triangle w}\Big(\alpha_{c_{k}}+2\triangle w-\frac{(\beta_{c_{k}}+2\triangle w)^{2}}{2m+2\triangle w} \\
        &+{\sum_{c \in C}^{c \neq c_{i}, c_{j}}}\big(\alpha_{c}-\frac{\beta_{c}^{2}}{2m+2\triangle w}\big)\Big)
    \end{split}
\end{equation}
where $\beta_{c_{k}}=\beta_{c_{i}}+\beta_{c_{j}}$. Therefore, we have the follows:
\begin{equation}
    \begin{split} \label{eq:4.3}
    &\triangle w > \frac{1}{2} \big( \alpha_{2} + \beta_{2} - 2m + \\
    &\sqrt{(2m - \alpha_{2} - \beta_{2})^{2} + 4(m\alpha_{1}+\beta_{c_{i}}\beta_{c_{j}})} \big)\\
    & \Leftrightarrow \alpha_{1}-2 \triangle w + \frac{(\beta_{c_{i}}+ \triangle w)(\beta_{c_{j}}+\triangle w)}{m+\triangle w} < 0 \\
    & \Leftrightarrow Q^{(t+1)}_{1} < Q^{(t+1)}_{2}
    \end{split}
\end{equation}
Hence, the conclusion follows.
\end{proof}


\subsection{Proof of Proposition~\ref{prop:3}}
  \label{Appendix_A_5}
\begin{proof}
Suppose the edge weight between vertices $i$ and $j$ has been decreased, where $c_{i}=c_{j}$. Let $Q^{(t+1)}_{1}$ be the modularity value if the community structure keeps unchanged, and $Q^{(t+1)}_{2}$ be the (best case) modularity value if $i$ and $j$ are split into smaller communities (i.e., $c^{\prime}_{i} \subseteq c_{i}$ and $c^{\prime}_{j} = c_{i} \backslash c^{\prime}_{i}$).
\begin{equation}
    \begin{split} \label{eq:3.1}
        Q^{(t+1)}_{1}&=\frac{1}{2m-2\triangle w}\Big(\alpha_{c_{i}}-2\triangle w-\frac{(\beta_{c_{i}}-2\triangle w)^{2}}{2m-2\triangle w} \\
        &+{\sum_{c \in C}^{c \neq c_{i}}}\big(\alpha_{c}-\frac{\beta_{c}^{2}}{2m-2\triangle w}\big)\Big)
    \end{split}
\end{equation}
\begin{equation}
    \begin{split} \label{eq:3.2}
        Q^{(t+1)}_{2}&=\frac{1}{2m-2\triangle w}\Big(\alpha_{c^{\prime}_{i}}-\frac{(\beta_{c^{\prime}_{i}}-\triangle w)^{2}}{2m-2\triangle w}+
        \alpha_{c^{\prime}_{j}} \\
        &-\frac{(\beta_{c^{\prime}_{j}}-\triangle w)^{2}}{2m-2\triangle w}
        +{\sum_{c \in C}^{c \neq c_{i}}}\big(\alpha_{c}-\frac{\beta_{c}^{2}}{2m-2\triangle w}\big)\Big)
    \end{split}
\end{equation}
where $\beta_{c_{i}}=\beta_{c^{\prime}_{i}}+\beta_{c^{\prime}_{j}}$.
\begin{equation}
    \begin{split} \label{eq:3.3}
    &Q^{(t+1)}_{1} - Q^{(t+1)}_{2} \\
    &=\frac{1}{2m-2\triangle w}\Big(
        \alpha_{c_{i}} - 2\triangle w - \alpha_{c^{\prime}_{i}} - \alpha_{c^{\prime}_{j}}\\
        &- \frac{(\beta_{c^{\prime}_{i}}-\triangle w) (\beta_{c^{\prime}_{j}}-\triangle w)}{m-\triangle w}
        \Big) \\
    &=\frac{(w_{ij}^{\prime} - \triangle w)\Big( (2m-\alpha_{c_{i}}) +(w_{ij}^{\prime}-\triangle w)\Big) - \alpha_{c^{\prime}_{i}}\alpha_{c^{\prime}_{j}}}{2(m-\triangle w)^{2}}
    \end{split}
\end{equation}
where $w_{ij}^{\prime}=\frac{\alpha_{c_{i}} - \alpha_{c^{\prime}_{i}} - \alpha_{c^{\prime}_{j}}}{2}$, $\beta_{c^{\prime}_{i}}=\alpha_{c^{\prime}_{i}}+w_{ij}^{\prime}$, $\beta_{c^{\prime}_{j}}=\alpha_{c^{\prime}_{j}}+w_{ij}^{\prime}$.

If $i$ or $j$ has only one neighbor vertex ($j$ or $i$), then,
\begin{equation}
    \begin{split} \label{eq:3.4}
    &Q^{(t+1)}_{1} - Q^{(t+1)}_{2} \\ &=\frac{(w_{ij}^{\prime} - \triangle w)\Big( (2m-\alpha_{c_{i}}) +(w_{ij}^{\prime}-\triangle w)\Big)}{2(m-\triangle w)^{2}} > 0
    \end{split}
\end{equation}
where $w_{ij}^{\prime} > \triangle w$, $2m > \alpha_{c_{i}}$. The conclusion follows.
\end{proof}

\subsection{Proof of Proposition~\ref{prop:6}}
  \label{Appendix_A_6}
\begin{proof}
Let $Q_{i}^{(t+1)}$ and $Q_{i}^{(t)}$ be the modularities of $c_{i}$ before and after the CCED/WD scenario. Then, we have:
\begin{equation}
    \begin{split} \label{eq:6.1}
    &\triangle Q = Q_{i}^{(t+1)}+Q_{j}^{(t+1)}-Q_{i}^{(t)}-Q_{j}^{(t)}=\frac{\triangle w(\alpha_{c_{i}}+\alpha_{c_{j}})}{2m(m+\triangle w)} \\
    &+\frac{1}{4}\big(\frac{\beta_{c_{i}}}{m} - \frac{\beta_{c_{i}}-\triangle w}{m-\triangle w}\big)\big(\frac{\beta_{c_{i}}}{m} + \frac{\beta_{c_{i}}-\triangle w}{m-\triangle w}\big)\\
    &+\frac{1}{4}\big(\frac{\beta_{c_{j}}}{m} - \frac{\beta_{c_{j}}-\triangle w}{m-\triangle w}\big)\big(\frac{\beta_{c_{j}}}{m} + \frac{\beta_{c_{j}}-\triangle w}{m-\triangle w}\big)
    \end{split}
\end{equation}
Let $k=\min\{ \big(\frac{\beta_{c_{i}}}{m} + \frac{\beta_{c_{i}}-\triangle w}{m-\triangle w}\big), \big(\frac{\beta_{c_{j}}}{m} + \frac{\beta_{c_{j}}-\triangle w}{m-\triangle w}\big)\}$. Thus,
\begin{equation}
    \begin{split} \label{eq:6.2}
    \triangle Q &>
    \frac{k}{4}\big(\frac{\beta_{c_{i}}+\beta_{c_{j}}}{m} - \frac{\beta_{c_{i}}+\beta_{c_{j}}-2\triangle w}{m-\triangle w}\big)\\
    &=\frac{k\triangle w (2m-\beta_{c_{i}}-\beta_{c_{j}})}{4m(m-\triangle w)}>0
    \end{split}
\end{equation}
where $2m>\beta_{c_{i}}+\beta_{c_{j}}$, $m>\triangle w$. The conclusion follows.
\end{proof}

\subsection{Proof of Proposition~\ref{prop:7}}
  \label{Appendix_A_7}
\begin{proof}
Let $Q_{1}^{(t+1)}$ denote the modularity value while merging $i$ into community $c_{j}$, $Q_{2}^{(t+1)}$ denote the modularity value while keeping $i$ as a singleton community, and $\triangle w > 0$ denote the sum of the weights of all of $i$'s associated edges. Then, we have:
\begin{equation}
    \begin{split} \label{eq:7.1}
    Q^{(t+1)}_{1}&=\frac{1}{2m+2\triangle w}\Big(\alpha_{c_{j}}+2\triangle w -\frac{(\beta_{c_{j}}+2\triangle w)^{2}}{2m+2\triangle w} \\
        &+{\sum_{c \in C}^{c \neq c_{j}}}\big(\alpha_{c}-\frac{\beta_{c}^{2}}{2m+2\triangle w}\big)\Big)
    \end{split}
\end{equation}
\begin{equation}
    \begin{split} \label{eq:7.2}
    Q^{(t+1)}_{2}&=\frac{1}{2m+2\triangle w}\Big(\alpha_{c_{j}}-\frac{(\beta_{c_{j}}+\triangle w)^{2}}{2m+2\triangle w} \\
        &-\frac{(\triangle w)^{2}}{2m+2\triangle w}+{\sum_{c \in C}^{c \neq c_{j}}}\big(\alpha_{c}-\frac{\beta_{c}^{2}}{2m+2\triangle w}\big)\Big)
    \end{split}
\end{equation}
\begin{equation}
    \begin{split} \label{eq:7.3}
    Q^{(t+1)}_{1}-Q^{(t+1)}_{2}&=\frac{\triangle w (2m-\beta_{c_{j}}) + (\triangle w)^{2}}{2(m+\triangle w)^{2}} > 0
    \end{split}
\end{equation}
where $2m \geq \beta_{c_{j}}$. Hence, the conclusion follows.
\end{proof}

\subsection{Proof of Proposition~\ref{prop:8}}
  \label{Appendix_A_8}
\begin{proof}
Let $Q_{1}^{(t+1)}$ and $Q_{2}^{(t+1)}$ denote the modularity values while merging $i$ into community $c_{p}$ and community $c_{q}$, respectively. Suppose $\triangle w_{ip} > \triangle w_{iq}$, and $\triangle w$ denoting the sum of the weights of all of $i$'s associated edges. Then, we have:
\begin{equation}
    \begin{split} \label{eq:8.1}
    &Q^{(t+1)}_{1} - Q^{(t+1)}_{2}=\frac{1}{2m+2\triangle w}
    \Big(\alpha_{c_{p}}+2\triangle w_{ip} \\
    &-\frac{(\beta_{c_{p}}+2\triangle w_{ip})^{2}}{2m+2\triangle w} + \alpha_{c_{q}}
    -\frac{(\beta_{c_{q}}+\triangle w_{iq})^{2}}{2m+2\triangle w} \Big)\\
    &- \frac{1}{2m+2\triangle w}
    \Big(\alpha_{c_{p}}-\frac{(\beta_{c_{p}}+\triangle w_{ip})^{2}}{2m+2\triangle w} + \alpha_{c_{q}} \\
    &+2\triangle w_{iq}
    -\frac{(\beta_{c_{q}}+2\triangle w_{iq})^{2}}{2m+2\triangle w} \Big)\\
    &=\frac{(4m-2\beta_{c_{p}}) \triangle w_{ip} - (4m-2\beta_{c_{q}}) \triangle w_{iq}}{(2m+2\triangle w)^{2}} \\
    &+\frac{(4\triangle w-3\triangle w_{ip}) \triangle w_{ip} - (4\triangle w-3\triangle w_{iq}) \triangle w_{iq}}{(2m+2\triangle w)^{2}}\\
    & \geq \frac{(k_{1}+k_{2})(\triangle w_{ip} -  \triangle w_{iq})}{(2m+2\triangle w)^{2}}
    \end{split}
\end{equation}
where $k_{1}=\min\{ (4m-2\beta_{c_{p}}), (4m-2\beta_{c_{q}})\}$ and $k_{2}=\min\{ (4\triangle w-3\triangle w_{ip}), (4\triangle w-3\triangle w_{iq})\}$.

Since $2m > \beta_{c_{p}}$, $2m > \beta_{c_{q}}$, $\triangle w > \triangle w_{ip}$, $\triangle w > \triangle w_{iq}$ and $\triangle w_{ip} > \triangle w_{iq}$, we have $k_{1}>0$, $k_{2}>0$, and $Q^{(t+1)}_{1} - Q^{(t+1)}_{2}>0$. Hence, the conclusion follows.
\end{proof}

\bibliographystyle{IEEEtran}
\bibliography{IEEEabrv,DynaMo}
\begin{IEEEbiography}[{\includegraphics[width=1.05in,height=2.0in,clip,keepaspectratio]{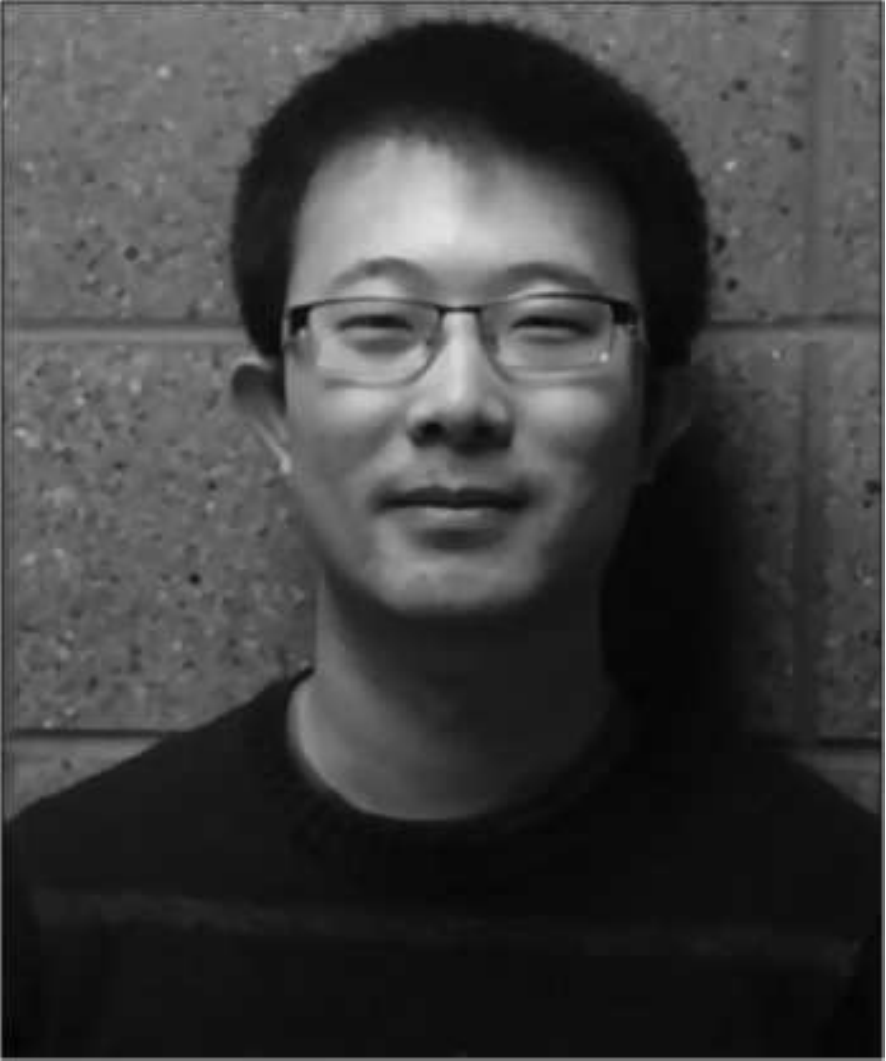}}]{Di Zhuang}
(S'15) received the B.E. degree in computer science and information security from Nankai University, China. He is currently pursuing the Ph.D. degree in electrical engineering with University of South Florida, Tampa. His research interests include cyber security, social network science, privacy enhancing technologies, machine learning and deep learning. He is a student member of IEEE.
\end{IEEEbiography}
\vskip 0pt plus -1fil
\begin{IEEEbiography}[{\includegraphics[width=1.05in,height=2.0in,clip,keepaspectratio]{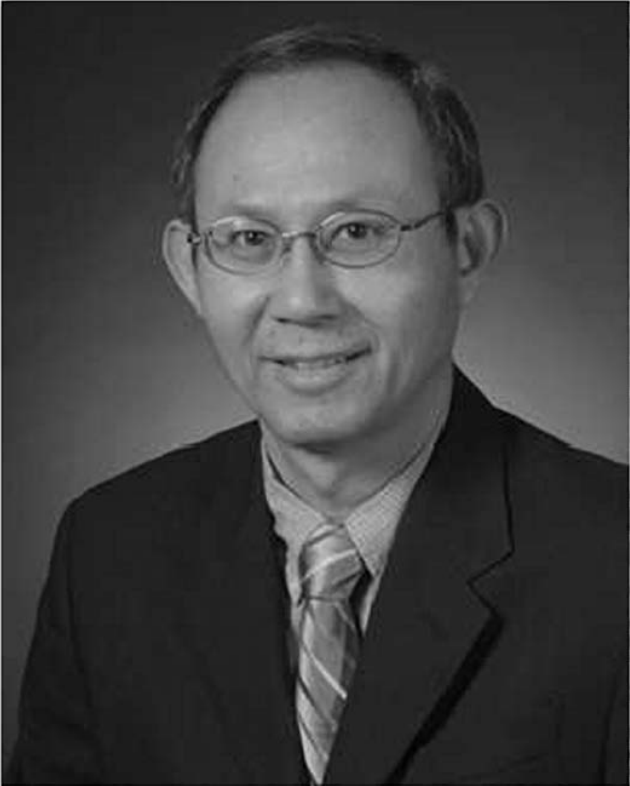}}]{J. Morris Chang}
(SM'08) is a professor in the Department of Electrical Engineering at the University of South Florida. He received the Ph.D. degree from the North Carolina State University. His past industrial experiences include positions at Texas Instruments, Microelectronic Center of North Carolina and AT\&T Bell Labs. He received the University Excellence in Teaching Award at Illinois Institute of Technology in 1999. His research interests include: cyber security, wireless networks, and energy efficient computer systems. In the last six years, his research projects on cyber security have been funded by DARPA. Currently, he is leading a DARPA project under Brandeis program focusing on privacy-preserving computation over Internet. He is a handling editor of Journal of Microprocessors and Microsystems and an editor of IEEE IT Professional. He is a senior member of IEEE.
\end{IEEEbiography}
\vskip 0pt plus -1fil
\begin{IEEEbiography}[{\includegraphics[width=1.05in,height=2.0in,clip,keepaspectratio]{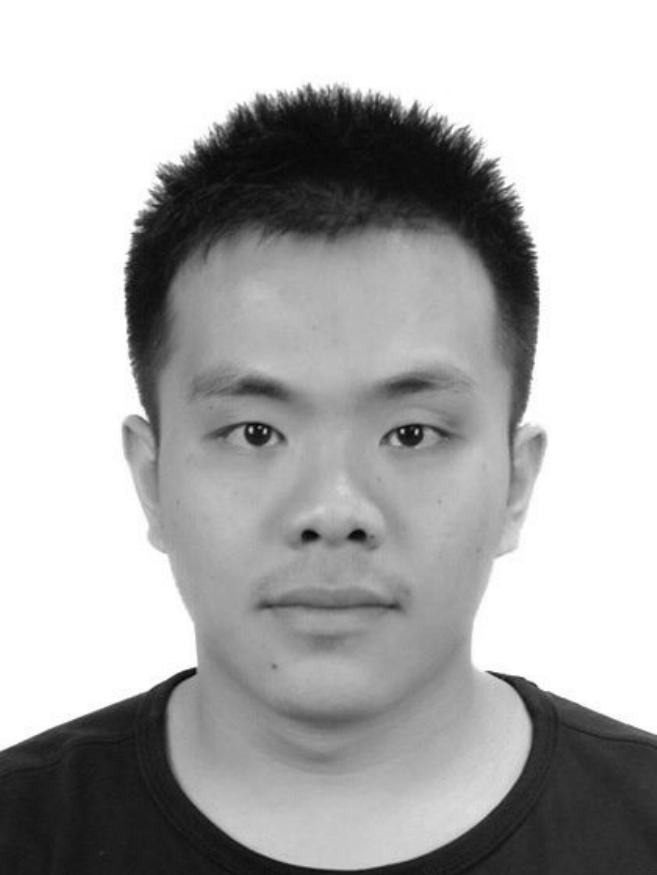}}]{Mingchen Li} received the M.S. degree in electrical engineering from Illinois Institute of Technology. He is currently pursuing the Ph.D. degree in electrical engineering with University of South Florida, Tampa. His research interests include cyber security, synthetic data generation, privacy enhancing technologies, machine learning and data analytics.
\end{IEEEbiography}
%









\end{document}